\date{ }
\newtheorem{thm}{Theorem}
\newtheorem{cor}{Corollary}
\newtheorem{lem}{Lemma}
\newtheorem{conjecture}{Conjecture}
\theoremstyle{remark}
\newtheorem{remark}{\bf Remark}
\title{\bf Well-indumatched Trees and Graphs of Bounded Girth}
\author{\bf\small\sc S. Akbari$^{a}$
\footnote{Email addresses:
s$\_$akbari@sharif.edu (S. Akbari),
tinaz.ekim@boun.edu.tr (T. Ekim),
ghodrati84@gmail.com (A.H. Ghodrati),
sa$\_$zare$\_$f@yahoo.com (S. Zare).}, 
T. Ekim$^{b}$
A.H. Ghodrati$^{a}$, S. Zare$^{c}$\\
{\footnotesize {\em $^{\rm a}$Department of Mathematical Sciences,
Sharif University of Technology}}\\{\footnotesize {\em
$^{\rm b}$ Department of Industrial Engineering, Bogazici University}}\\{\footnotesize {\em
$^{\rm c}$Department of Mathematics and Computer
Science, Amirkabir University of Technology}}}
\begin{document}
\maketitle

\begin{abstract}
A graph $G$ is  called well-indumatched if all of its
maximal induced matchings have the same size. In this paper,
we characterize all well-indumatched trees.
We provide a linear time algorithm to decide if a tree is
well-indumatched or not.
Then, we characterize minimal well-indumatched graphs of girth at least 9 and show subsequently that for an odd integer $g\geq 9$ and $g\neq 11$, there is no well-indumatched graph of girth $g$. On the other hand, there are infinitely many well-indumatched unicyclic graphs of girth $k$,
where $k\in\{3,5,7\}$ or $k$ is an even integer greater than $2$. 
We also show that, although the recognition of well-indumatched graphs is 
known to be co-NP-complete in general, one can recognize in polynomial time 
well-indumatched graphs, where the size of maximal induced matchings is fixed.\\

\noindent {\textbf{Keywords:} Tree, Unicyclic, Well-indumatched, induced matching, distance-2 matching, strong matching}\\

\noindent {AMS 2010 Subject Classification: 05C05, 05C70}

\end{abstract}

\section{Introduction}\label{sec:intro}
Let $G$ be a graph.
A \emph{matching} in $G$ is a subset $M$ of $E(G)$ such that no two edges
of $M$ have a common endpoint. An \emph{induced matching} is a matching $M$
such that no two edges of $M$ is joined by an edge, in other words, 
$M$ occurs as an induced subgraph of $G$. In this paper, 
we are interested in graphs such that all their inclusion-wise maximal 
induced matchings have the same size. These graphs have been introduced very recently 
in \cite{Baptiste2017}, where they are called \emph{well-indumatched} graphs.

Induced matchings are also known as \emph{$2$-separated matchings} 
\cite{Stockmeyer1982}, \emph{strong matchings} \cite{Fricke1992} and 
\emph{distance-$2$ matchings}. This later is used in a more general context; 
a matching is called a \emph{distance-$k$} matching if the distance between any 
two edges of the matching is at least $k$, where the distance between two edges 
is the minimum of the distances (lengths of the shortest paths) between 
endvertices of the two edges. Clearly, an induced matching is a set of edges in 
which the distance between any two edges is at least $2$, hence a distance-2 matching.

Given a graph $G$,  {\sc Maximum Induced Matching}, called for short {\sc MIM}, is the problem of finding an induced matching of maximum size. This concept was introduced by Stockmeyer and 
Vazirani \cite{Stockmeyer1982}, where it was called ``risk-free marriage". As pointed out in \cite{Golumbic2000}, MIM finds applications in secure communication networks, VLSI design and network flow problems. Besides, it is closely related to    strong edge-colorings considered by Erd\"os  \cite{Erdos1988}, where every path of three edges needs three colours (see for instance \cite{Hocquard2013}). It follows from its definition that a strong edge-coloring of a graph boils down to partitioning its edge set into induced matchings. Another relation which makes MIM an interesting problem,
is its direct link with the irredundancy number of a graph \cite{Golumbic1993}. MIM is an important problem both for its applications and its relation to other important graph parameters. However, it is known to be NP-hard even for very restricted graph classes. For instance, it is known that MIM remains NP-hard in planar 3-regular graphs \cite{zito}, in planar bipartite graphs with degree $2$ vertices in one part and degree 
$3$ vertices in the other part \cite{shephard}, in $k$-regular bipartite graphs for any $k\geq 3$ \cite{Dabrowski2013}, and in Hamiltonian graphs, claw-free graphs and line graphs \cite{Kobler}. On the other hand, MIM is polynomial-time solvable in trees \cite{Golumbic2000}, chordal graphs \cite{Cameron1989}, circular arc graphs \cite{Golumbic1993} and interval graphs \cite{Golumbic2000}. One can refer to \cite{Dabrowski2013} and \cite{zito} for a more comprehensive literature review on the complexity status of MIM in various graph classes.

Another problem closely related to induced matchings, but less studied than MIM, is the problem of finding an inclusion-wise maximal induced matching of minimum size. This problem is called {\sc Minimum Maximal Induced Matching} and denoted by MMIM for short. MMIM has been shown to be NP-hard even in bipartite graphs of maximum degree 4 \cite{Orlovich2008} or in graphs having all of their maximal induced matchings of size either $k$ or $k+1$ for some integer $k\geq 1$ \cite{Baptiste2017}. The generalization of MMIM to distance-$k$ matchings has been also considered recently \cite{Kartynnik2017}.

When a graph parameter is hard to compute, one way to tackle this 
difficulty is to search for the so-called ``greedy instances",
where a greedy algorithm always ensures an optimal solution. 
In other words, one can be interested in graphs for which the difficult task of 
finding a largest or smallest set having a given property becomes trivial 
using a greedy approach. Several such structures have been studied in the 
literature. These results include \emph{well-covered} graphs defined as graphs 
such that all inclusion-wise maximal independent sets have 
the same size (see e.g. \cite{Plummer1993}), the edge analogue of 
well-covered graphs called \emph{equimatchable} graphs, where all maximal 
matchings have the same size (see e.g. \cite{Lesk1984}), \emph{well-dominated} 
graphs having all of their minimal dominating sets of the same 
size \cite{Finbow1988}, and  \emph{well-totally-dominated} graphs having all of 
their minimal total dominating sets of the same size \cite{Hartnell1997}.  

In \cite{Caro1996}, Caro, Seb\"o and Tarsi suggested a unified approach to study such greedy instances. Each one of the above mentioned graph classes has been extensively studied since then. For such a graph class $\mathcal G$, typical research questions considered in the literature include:
\begin{enumerate}
\item Structural characterizations of $\mathcal G$ and/or its subclasses,
\item Complexity of the recognition of the class $\mathcal G$ and/or its subclasses (usually obtained by the help of 1.),
\item Forbidden subgraphs in $\mathcal G$, if any, and characterization of hereditary graphs in $\mathcal{G}$, namely those graphs in $\mathcal{G}$ having all their induced subgraphs also in $\mathcal{G}$. 
\item Complexity of various graph problems in $\mathcal G$.
\end{enumerate}

As suggested in \cite{Caro1996}, 
the extensions of such ``greedy instances", where possible sizes of the sets with the 
desired property have only two possible (consecutive or not) values have also 
been considered in the case of equimatchability (the related graphs are called \emph{almost-equimatchable} \cite{Deniz2016}) 
and well-coveredness \cite{Ekim2018, Finbow1994}.

In the same spirit, a generalization of well-covered graphs, called $p$-equipackable graphs, were defined. A
graph is \emph{$p$-equipackable} if all its maximal $p$-packings are of the same size, where a \emph{$p$-packing} is a set of vertices such that the distance between any two distinct vertices in this set is larger than $p$ \cite{Kartynnik2017}. The edge analogue of $p$-equipackable graphs, called $p$-equimatchable graphs, were recently introduced in \cite{Kartynnik2017}; a graph is \emph{$p$-equimatchable} if all of its maximal distance-$p$ matchings have the same size. We note that equimatchable graphs are exactly 1-equimatchable graphs. Although deciding whether a given graph is equimatchable or not can be done in polynomial time \cite{Demange2014}, it has been shown in  \cite{Kartynnik2017} that the recognition of $p$-equimatchable graphs is co-NP-complete for any fixed $p\geq 2$. 
Note that 2-equimatchable graphs are exactly well-indumatched graphs which is the focus of our paper. In \cite{Baptiste2017}, it has been shown that recognizing a well-indumatched graph is a co-NP-complete problem even for $(2P_5, K_{1,5})$-free graphs. They also prove that, under the same restriction, the problem of recognizing a graph that has maximal induced matchings of at most $t$ distinct sizes is co-NP-complete for any given $t \geq 1$. After establishing the hardness of the recognition problem, the authors show that the decision versions of {\sc Independent Dominating Set}, {\sc Independent Set} and {\sc Dominating Set} problems are all NP-complete in the class of well-indumatched graphs. Then, they focus on the structure of well-indumatched graphs; they note that well-indumatched graphs are not hereditary and they characterize the so-called \emph{perfectly well-indumatched} graphs which are well-indumatched graphs such that all induced subgraphs are also well-indumatched. 

We start our paper with some definitions and preliminary results in Section \ref{sec:prem}. Then we proceed with the study of the structure and recognition of some subclasses of well-indumatched graphs. Note that for those graph classes $\mathcal G$, 
where both MIM and MMIM can be solved in polynomial time, one can decide whether a given graph in $\mathcal G$ is well-indumatched or not simply by solving each one of the two problems and checking if their optimal values coincide or not. However, in such a class $\mathcal G$, it is still interesting to find structural characterizations of well-indumatched graphs which can 
possibly lead to simpler recognition algorithms. This is the case for trees as both MIM and MMIM can be solved in linear time by the algorithms given in \cite{Golumbic2000} and \cite{Lepin2006}, respectively. In Section \ref{sec:trees}, we provide a simple characterization of well-indumatched trees which provides a much simpler linear time recognition algorithm. 

Section \ref{sec:unicyclic} is devoted to well-indumatched graphs with bounded girth. We characterize all minimal well-indumatched graphs of girth at least 9. This result implies that for an odd integer $g\geq 9$ and $g\neq 11$, there is no well-indumatched graph of girth $g$. On the other hand, there are infinitely many well-indumatched trees, 
infinitely many well-indumatched unicyclic graphs of girth $k$,
where $k\in\{3,5,7\}$ or $k$ is an even integer greater than $2$; 
and also infinitely many well-indumatched $r$-regular graphs of girth $3$,
where $r\geq 3$ is an arbitrary integer.

Finally, Section \ref{sec:fixed} is dedicated to well-indumatched graphs with a 
fixed size $k$ of  maximal induced matchings. We show that, when we consider the 
class of well-indumatched graphs with $k=1$, {\sc Weighted Independent Set} is 
polynomial time solvable, whereas {\sc Dominating Set} is NP-complete. 
This later result strengthens the known result of NP-hardness of 
{\sc Dominating Set} in well-indumatched graphs by restricting the size of 
maximal induced matchings to 1. Remind that the recognition of well-indumatched 
graphs is co-NP-complete even for $(2P_5, K_{1,5})$-free graphs. 
We show that deciding whether all induced matchings are of the same size $k$ can be done in polynomial time when $k$ is fixed.

\section{Definitions and Preliminaries}\label{sec:prem}

All graphs in this paper are finite, simple and undirected.
For a graph $G$, the vertex set and the edge set of $G$ are denoted
by $V(G)$ and $E(G)$, and their cardinalities $n$ and $m$ are called the
\emph{order} and the \emph{size} of $G$, respectively. The degree of a vertex $v\in V(G)$ is the number of vertices adjacent to it and it is denoted by $d(v)$. For an integer $r\geq 1$, a graph is said to be \emph{$r$-regular} if every vertex has degree $r$. The \emph{girth} of $G$ is the length of its shortest cycle.
The \emph{path} and the \emph{cycle} of order $n$
are denoted by $P_n$ and $C_n$, respectively. 
By $kH$ we denote the disjoint union of $k$ graphs each one isomorphic to $H$. 
We say that a graph $G$ is $H$-free, whenever $G$ does not contain $H$ as an 
induced subgraph. 

The \emph{distance} between two vertices $u,v\in V(G)$ is the
length of the shortest path between $u$ and $v$. 
The \emph{distance} between two edges $e_1, e_2$ of $G$ is defined as
the minimum distance between an end-point of $e_1$ and an end-point of $e_2$. 
For two edges $e_1$ and $e_2$,
we say that $e_1$ \emph{covers} $e_2$ if the distance between $e_1$ and $e_2$
is at most $1$. In particular, an edge covers itself. We say that a subset $F$ of edges covers an edge $e$ if there is an edge $f\in F$ such
that $f$ covers $e$. Note that an induced matching $M$ is maximal if and only if $M$ covers $E(G)$.

A graph $G$ is called \emph{reduced} if no two vertices of $G$ have the
same set of neighbors.
Note that a tree is reduced if and only if each vertex is adjacent to
at most one pendant edge (a \emph{pendant edge} is an edge incident with
a vertex of degree one).
Let $G$ be a graph.
For each set of vertices with same neighbors in $G$, remove
all of them except one.
Notice that this procedure will never create a new pair of vertices with the
same set of neighbors. Therefore, the resulting graph, called the \emph{reduction}
of $G$ and denoted by $R(G)$, is a uniquely defined reduced graph.

The following remark shows that one can restrict
the study of well-indumatched graphs to the reduced graphs.
\begin{remark}\label{rem:reduced}
The graph $G$ is well-indumatched
if and only if  $R(G)$ is  well-indumatched.
\end{remark}

\begin{proof}
To see this, assume without loss of generality that $x$ and $y$ are the only distinct vertices of $G$ having the same neighborhood. 
The general case can be shown by repeating the following argument for all such pairs. Any matching covering all edges incident with $x$ also covers all edges incident with $y$. Moreover, since an edge incident with $x$ covers all edges incident with $y$ and vice versa, an induced matching of $G$ can contain at most one edge incident with $x$ or $y$. So the result
follows.
\end{proof}

In \cite{Baptiste2017}, it is noticed that well-indumatched graphs are 
not hereditary as a $P_5$ is not well-indumatched but a $P_7$ which 
contains $P_5$ as an induced subgraph is well-indumatched. 
In addition, the authors provide a construction which shows that for any 
graph $H$, there is a well-indumatched graph $G$ containing $H$ as an 
induced subgraph. In other words, this certifies that there is no forbidden 
induced subgraph for a graph to be well-indumatched. 
Based on this observation, they characterize well-indumatched graphs which all 
their induced subgraphs are also well-indumatched, by three minimal forbidden 
subgraphs (see Theorem 10 in \cite{Baptiste2017}). 
Given a graph $G$, they also introduce the concept of 
\emph{co-indumatched subgraph} which is a subgraph $F$ of $G$ obtained by the 
removal of the closed neighborhood of the end-points of $M$ for some 
induced matching $M$ (possibly $M=\emptyset$) of $G$. 
This concept is then used to characterize well-indumatched graphs by 
forbidden co-indumatched subgraphs (Theorem 9 in \cite{Baptiste2017}). 
The following is a reformulation of Proposition 2 in \cite{Baptiste2017} 
which will be useful in our proofs.

\begin{lem}\label{lem:removal}
 Let $G$ be a well-indumatched graph and $F_0\subseteq E(G)$
 be an induced matching. If $F$ is the
 set of edges covered by $F_0$, then $G\setminus F$ is well-indumatched.
\end{lem}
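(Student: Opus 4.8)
The plan is to show that every maximal induced matching of $G\setminus F$ has the same size by relating such matchings back to maximal induced matchings of the original well-indumatched graph $G$. The key structural observation is that $G\setminus F$ is obtained by deleting all edges incident with or adjacent to $F_0$, together with the isolated vertices this creates; intuitively, $G\setminus F$ is exactly the part of $G$ that an induced matching can still ``see'' once we have committed to including $F_0$. So I would first establish the bijection-flavored lemma: if $M'$ is \emph{any} induced matching of $G\setminus F$, then $F_0\cup M'$ is an induced matching of $G$. This requires checking that no edge of $M'$ covers an edge of $F_0$ (true because every such edge lives in $F$ and was removed) and that $M'$ itself is induced (inherited from $G\setminus F$, since $G\setminus F$ is an induced substructure on the surviving edges). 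The reverse direction is equally important: if $N$ is an induced matching of $G$ with $F_0\subseteq N$, then $N\setminus F_0$ is an induced matching of $G\setminus F$, because every edge of $N\setminus F_0$ is disjoint from and non-adjacent to $F_0$, hence avoids $F$.

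The second step is to promote this correspondence to \emph{maximal} induced matchings. I would argue that $M'$ is a maximal induced matching of $G\setminus F$ if and only if $F_0\cup M'$ is a maximal induced matching of $G$. For one direction, suppose $M'$ is maximal in $G\setminus F$ but $F_0\cup M'$ fails to cover some edge $e$ of $G$. Since $F_0$ covers exactly $F$, the uncovered edge $e$ must lie in $E(G)\setminus F$, i.e. $e$ is an edge of $G\setminus F$; but then $e$ is uncovered by $M'$ in $G\setminus F$ as well, contradicting maximality of $M'$. Conversely, if $F_0\cup M'$ is maximal in $G$, any edge $e$ of $G\setminus F$ not covered by $M'$ would also be uncovered by $F_0\cup M'$ in $G$ (it is not in $F$, so $F_0$ does not cover it), again a contradiction. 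Thus maximality transfers cleanly in both directions via the fixed additive shift by $F_0$.

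With this equivalence in hand, the conclusion is immediate. Every maximal induced matching $N$ of $G$ that contains $F_0$ has the form $F_0\cup M'$ for a maximal induced matching $M'$ of $G\setminus F$, and $|N|=|F_0|+|M'|$. Since $G$ is well-indumatched, all such $N$ have a common size, say $s$; hence all the corresponding $M'$ have the common size $s-|F_0|$. Because every maximal induced matching of $G\setminus F$ arises this way, all maximal induced matchings of $G\setminus F$ share the size $s-|F_0|$, so $G\setminus F$ is well-indumatched.

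The step I expect to be the main obstacle is verifying that $G\setminus F$ really is an induced substructure in the right sense, so that ``induced matching of $G\setminus F$'' and ``induced matching of $G$ disjoint from $F$'' coincide; one must be careful that deleting the covered edge set $F$ (rather than a vertex set) does not spuriously destroy the adjacency constraints that make a matching induced, and that no edge of $G\setminus F$ is secretly covered by $F_0$ in $G$. I would handle this by appealing directly to the definition of ``covers'' given in the preliminaries: $F$ is precisely the set of edges at distance at most $1$ from $F_0$, so the surviving edges are exactly those at distance at least $2$ from every edge of $F_0$, which is what guarantees both that adjacencies among surviving edges are unaffected and that $F_0$ can be freely adjoined.
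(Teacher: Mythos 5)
Your proposal is correct and follows essentially the same route as the paper: the key step in both is that $M'\mapsto F_0\cup M'$ sends maximal induced matchings of $G\setminus F$ to maximal induced matchings of $G$, after which the common size $s$ of the latter forces every such $M'$ to have size $s-|F_0|$. You simply spell out (correctly) the verifications that the paper leaves implicit, including the point that no two surviving edges can be joined by a removed edge, plus a converse correspondence that is not actually needed for the conclusion.
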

\begin{proof}
Let $M_1$ and $M_2$ be two maximal induced matchings of $G\setminus F$. Then 
 $M_1\cup F_0$ and $M_2 \cup F_0$ are maximal induced matchings of $G$. 
 Since $G$ is well-indumatched, we have $|M_1\cup F_0| = |M_2\cup F_0|$, 
 which, together with the fact that $M_i \cap F_0 = \emptyset$, for $i=1,2$, 
 implies that $|M_1| = |M_2|$.
 Therefore $G\setminus F$ is well-indumatched.
\end{proof}

\section{Characterization of Well-Indumatched Trees}\label{sec:trees}

In this section we give a simple necessary and sufficient condition for a
reduced tree to be well-indumatched. Note that although this section is related to trees, Lemmas \ref{lem:cut-edge}, \ref{lem:P5} and \ref{lem:unique_cover} are for general well-indumatched graphs and can be useful in other contexts.

\begin{lem}\label{lem:longest_path}
	Let $T$ be a tree with a longest path $P=v_1\cdots v_k$ such that
	the degree of each vertex of $P$ is at most $3$, and those of degree $3$
	are incident to a pendant edge. 
	Then, $T$ is well-indumatched if and only if
	$k \in \{1,2,3,4\}$ or $k=7$ and $d(v_4)=2$.
\end{lem}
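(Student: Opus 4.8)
The plan is to prove the two directions separately, after first invoking Remark~\ref{rem:reduced} to assume $T$ is \emph{reduced}. This costs nothing: reduction merges twin leaves but leaves the spine $v_1\cdots v_k$ intact, so it preserves both $k$ and $d(v_4)$, while forcing $v_2$ and $v_{k-1}$ to have degree $2$. Thus I may assume every spine vertex carries at most one pendant and that the two ends are clean.

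For the sufficiency (the cases $k\le 4$, and $k=7$ with $d(v_4)=2$): when $k\le 4$ I would show $T$ has \emph{no} induced matching of size $2$ at all, since two edges at distance $\ge 2$ would span an induced $P_5$, contradicting that the longest path has at most $4$ vertices; hence every maximal induced matching is a single edge (or empty when $k=1$). When $k=7$ and $d(v_4)=2$ I would prove every maximal induced matching has size exactly $2$ via matching bounds. For the upper bound, the degree-$2$ vertex $v_4$ is a separator: the subtrees $L^+=T[\{v_1,v_2,v_3,v_4\}\cup\{\text{pendant at }v_3\}]$ and $R^+=T[\{v_4,v_5,v_6,v_7\}\cup\{\text{pendant at }v_5\}]$ each have longest path on $4$ vertices, so by the $k\le4$ argument each contributes at most one edge to any induced matching; since $E(L^+)$ and $E(R^+)$ partition $E(T)$, no induced matching exceeds size $2$. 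For the lower bound I would use the edge-distance triangle inequality $d(e_1,e_3)\le d(e_1,e_2)+1+d(e_2,e_3)$: the extreme edges $v_1v_2$ and $v_6v_7$ satisfy $d(v_1v_2,v_6v_7)=d(v_2,v_6)=4>3$, so no single edge can cover both, whence every maximal induced matching has at least two edges.

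For the necessity I would exhibit, in every remaining case, two maximal induced matchings of different sizes. The small cases are explicit: for $k=5$ take $\{v_2v_3\}$ against $\{v_1v_2,v_4v_5\}$; for $k=6$ take $\{v_3v_4\}$ against $\{v_1v_2,v_5v_6\}$; and for $k=7$ with $d(v_4)=3$ take $\{v_2v_3,v_5v_6\}$ against $\{v_1v_2,\,v_4p,\,v_6v_7\}$, where $p$ is the pendant at $v_4$. In each case one checks that both matchings cover $E(T)$, using that an end spine edge such as $v_1v_2$ already covers its neighbouring pendants. For $k\ge 8$ I would instead peel off the far end using Lemma~\ref{lem:removal}: setting $F_0=\{v_7v_8\}$ (which exists precisely because $k\ge 8$) and letting $F$ be the edges it covers, one checks $F$ contains $v_5v_6,v_6v_7,v_7v_8$ but spares $v_4v_5$, so the component of $T\setminus F$ through $v_1$ is exactly the caterpillar with spine $v_1\cdots v_5$ and its pendants at $v_3,v_4,v_5$. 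Its longest path has $5$ vertices if $v_5$ carries no pendant and $6$ otherwise, so it is one of the already-settled bad cases $k=5,6$ and is not well-indumatched; since maximal induced matchings of a disconnected graph are disjoint unions over the components, $T\setminus F$ is not well-indumatched, and Lemma~\ref{lem:removal} then forces $T$ to be not well-indumatched.

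The step I expect to be the main obstacle is controlling what the peeling leaves behind when $k\ge 8$: if the new endpoint of the residual caterpillar carries a pendant, its \emph{effective} spine length jumps by one, so peeling down to six surviving spine vertices could accidentally produce the \emph{good} case $k=7$ and break the reduction. The safeguard is to always leave exactly five original spine vertices on the left, since both resulting lengths $5$ and $6$ are bad. Verifying that $F_0=\{v_7v_8\}$ indeed cuts $v_5v_6$ but keeps $v_4v_5$, and that the surviving pendants at $v_3,v_4,v_5$ do not spoil the maximality of the two witness matchings in the small cases, is the delicate bookkeeping on which the whole argument rests.
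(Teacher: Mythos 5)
Your proof is correct and follows essentially the same route as the paper: the small cases ($k\le 4$, $k\in\{5,6\}$, and $k=7$ with the two subcases) are handled by direct exhibition of maximal induced matchings, which the paper dismisses as ``not hard to see'' and you verify in detail, and for $k\ge 8$ you peel off the far end with a single edge ($F_0=\{v_7v_8\}$) and apply Lemma~\ref{lem:removal} to land on a residual caterpillar with longest path on $5$ or $6$ vertices, exactly the reduction the paper uses. Your preliminary pass to a reduced tree and your careful bookkeeping that the surviving component cannot accidentally become the good case $k=7$ are sound refinements of the same argument.
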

\begin{proof}
	It is not hard to see that if $k\in\{1,2,3,4\}$, 
	or $k=7$ and $d(v_4) = 2$, then $T$ is well-indumatched,
	and if $k \in \{5,6\}$ or $k=7$ and 
	$d(v_4)=3$, then $T$ is not well-indumatched.
	Now, let $k\geq 8$. Then,
	$T$ has an edge $e$ such that a component of the graph resulted from
	the removal of the edges covered by $e$ is
	a tree of the type described in the statement of the
	lemma and with a longest path of order $5$ or $6$.
	Since a graph is well-indumatched if and only if its connected
	components are well-indumatched, it follows from Lemma \ref{lem:removal} that $T$ is not well-indumatched.
\end{proof}

\begin{lem}\label{lem:indupath}
The following statements hold:\\
 (i) The path $P_n$ is well-indumatched if and only if $n\in\{1,2,3,4,7\}$.\\
 (ii) The cycle $C_n$ is well-indumatched if and only if
 $n\in\{3,4,\ldots,8,11\}$.
\end{lem}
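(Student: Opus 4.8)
The plan is to prove both parts by explicitly computing the possible sizes of maximal induced matchings in paths and cycles, exploiting the fact that these are the simplest possible graphs and that $P_n$ is a special case of the tree in Lemma~\ref{lem:longest_path}.

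For part (i), I would observe that $P_n$ is exactly a tree whose longest path is $P_n$ itself, with every vertex of degree at most $2$, so in particular no vertex of degree $3$ exists. Thus $P_n$ falls directly under the hypothesis of Lemma~\ref{lem:longest_path} with $k=n$, and since the degree condition ``those of degree $3$ are incident to a pendant edge'' is vacuously satisfied, the lemma immediately yields that $P_n$ is well-indumatched if and only if $k=n\in\{1,2,3,4\}$ or $n=7$ (the extra condition $d(v_4)=2$ is automatic for a path). This reduces part (i) to a one-line application of the preceding lemma, so the only work is to note that the case $n=7$ with $d(v_4)=3$ cannot occur in a path.

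For part (ii), the cycle is no longer a tree, so I would argue directly. First I would handle small cases $n\in\{3,4,\dots,8,11\}$ by checking that every maximal induced matching has the same size; for a cycle $C_n$, an induced matching picks edges so that between consecutive chosen edges there are enough unchosen edges to keep them at distance at least $2$, and maximality forces these gaps to be small. The clean way is to count: in $C_n$ a maximal induced matching with $t$ edges ``uses up'' each chosen edge together with the two edges adjacent to it, and between two consecutive chosen edges one can leave a gap of either $2$ or $3$ edges while retaining both maximality and the induced condition. This gives a simple arithmetic relation on $n$ and $t$ that pins down which $n$ force a unique value of $t$. For the negative direction, when $n\notin\{3,\dots,8,11\}$, I would exhibit two maximal induced matchings of different sizes, most easily by combining gaps of size $2$ and size $3$ in two different proportions around the cycle; alternatively, for large $n$ one can use Lemma~\ref{lem:removal} by deleting the edges covered by one well-chosen edge, leaving a path $P_{n-4}$ whose structure is already understood from part~(i).

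The main obstacle I expect is the cycle analysis, specifically getting the exact threshold right and not missing the sporadic value $n=11$. The value $11$ appears because $11=4t+r$-type gap arithmetic admits a unique solution there even though nearby odd values do not, so I would be careful to verify that for $n\in\{9,10,12,13,\dots\}$ one genuinely obtains two different matching sizes, whereas $n=11$ is forced. Concretely, if each ``block'' around the cycle (a chosen edge plus its trailing gap) has length $3$ or $4$, then a maximal induced matching of size $t$ corresponds to writing $n$ as a sum of $t$ blocks each of length $3$ or $4$; the set of achievable $t$ is an interval, and $C_n$ is well-indumatched precisely when this interval is a single point. Solving $3t \le n \le 4t$ for a unique integer $t$ yields exactly $n\in\{3,4,5,6,7,8\}$ from the small interval together with $n=11$ (where $t=3$ is forced since $3\cdot 3=9\le 11\le 12=4\cdot 3$ but $t=4$ would need $n\ge 12$ and $t=2$ would need $n\le 8$), which recovers the stated list. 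Checking this block-decomposition correspondence rigorously — that every maximal induced matching arises this way and that every such block-sum is realizable — is the one place where care is needed, but it reduces to elementary counting once the correspondence is set up.
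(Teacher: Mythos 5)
Part (i) of your proposal is exactly the paper's proof: $P_n$ satisfies the hypotheses of Lemma~\ref{lem:longest_path} with $k=n$ (the degree-$3$ condition is vacuous), so that lemma gives the answer immediately. The problem is in part (ii), where your block arithmetic is wrong. In a maximal induced matching of $C_n$ with $t\geq 2$ edges, the gap between two consecutive chosen edges can consist of $2$, $3$ \emph{or $4$} unchosen edges: a gap of $4$ edges contains no edge at distance at least $2$ from both of the chosen edges bounding it, so it is compatible with maximality, and only a gap of $5$ or more violates it. Hence the blocks have length $3$, $4$ or $5$, the achievable sizes are the integers $t\geq 2$ with $3t\leq n\leq 5t$ (plus $t=1$ exactly when $n\leq 5$), and it is the unique solvability of \emph{this} condition that characterizes $n\in\{3,\ldots,8,11\}$.

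With your interval $3t\leq n\leq 4t$ the computation fails precisely at the delicate values: $n=5$ would admit no maximal induced matching at all, and $n=9$ and $n=10$ would each have a unique admissible $t$ (namely $t=3$), so $C_9$ and $C_{10}$ would be certified as well-indumatched --- but they are not: $C_9$ has maximal induced matchings of sizes $2$ and $3$ (gap patterns $(3,4)$ and $(2,2,2)$), and likewise $C_{10}$ (patterns $(4,4)$ and $(2,2,3)$). Moreover the list you claim to extract from your interval does not even follow from it, since $3t\leq n\leq 4t$ has a unique solution for $n=9,10,13,14$ as well. The fallback you mention --- removing the edges covered by a single edge to leave $P_{n-4}$ and applying Lemma~\ref{lem:removal} together with part (i) --- is the paper's actual argument and disposes of all $n\geq 12$; but it still leaves $n=9,10$ (not well-indumatched) and $n=11$ (well-indumatched) to be settled by a direct case check, which is exactly the step your arithmetic gets wrong.
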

\begin{proof}
 The first part follows from Lemma \ref{lem:longest_path}.
 For the second part, it can be seen that $C_n$ for $n\in\{3,4,\ldots,8,11\}$ is
 well-indumatched, and that $C_9$ and $C_{10}$ are not. If $n\geq 12$, then for any edge
 $e$ the removal of edges covered by $e$ results in a path with $n-4$ vertices
 which is not well-indumatched by the previous part.
 Therefore, by Lemma \ref{lem:removal}, $C_n$ is not well-indumatched.
\end{proof}

\begin{lem}\label{lem:cut-edge}
 Let $G$ be a connected well-indumatched  graph and $e=uv$ be a cut-edge
 of $G$. If $u$ is an end vertex of a path of length at least two
 which does not contain $v$, then the component of $G\setminus e$ containing
 $v$ is well-indumatched.
\end{lem}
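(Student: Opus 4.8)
The plan is to reduce the claim to Lemma \ref{lem:removal} by exhibiting a single edge whose set of covered edges contains $e$ but avoids the side of the cut-edge containing $v$. Write $H$ for the component of $G\setminus e$ containing $v$, write $G'$ for the component containing $u$, and let $uw_1w_2$ denote the first two edges of the given path $P$ of length at least two starting at $u$. Since $e=uv$ is a cut-edge and $P$ does not contain $v$, the path $P$ cannot cross $e$, so both $w_1$ and $w_2$ lie in $G'$; in particular $w_1,w_2\notin V(H)$.

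Next I would set $F_0=\{w_1w_2\}$, which is trivially an induced matching, and let $F$ be the set of edges it covers. I first claim $e\in F$: indeed $u$ is adjacent to $w_1$, so the distance between $e=uv$ and $w_1w_2$ is at most $1$. Hence removing $F$ deletes the cut-edge $e$ and thereby separates $V(H)$ from the rest of $G\setminus F$.

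The key point---and the step I expect to require the most care---is to show that no edge of $H$ lies in $F$, so that $H$ survives intact in $G\setminus F$. Because $e$ is a cut-edge, every path in $G$ from a vertex of $H$ to $w_1$ must traverse $e$, hence pass through both $v$ and $u$; as $uw_1\in E(G)$, this forces the distance from any $x\in V(H)$ to $w_1$ to be at least $2$ (with equality only for $x=v$), and hence the distance from $x$ to $w_2$ to be at least $3$. Therefore every edge of $H$ is at distance at least $2$ from $w_1w_2$ and so is not covered by $F_0$. This is exactly where the hypothesis that $P$ has length at least two is used: had we been forced to use an edge incident with $u$, it would cover the edges incident with $v$ and thereby delete edges of $H$.

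Finally, combining the two claims, $H$ is a connected component of $G\setminus F$ with all of its edges preserved. By Lemma \ref{lem:removal}, $G\setminus F$ is well-indumatched, and since a graph is well-indumatched if and only if each of its connected components is, it follows that $H$ is well-indumatched, as desired.
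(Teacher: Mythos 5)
Your proof is correct and follows essentially the same route as the paper's: both rest on the observation that the edge $w_1w_2$ (the paper's $xy$), lying two steps from $v$ along the given path, covers $e$ but no edge of the component of $v$, and then apply Lemma \ref{lem:removal}. The only (cosmetic) difference is that the paper extends $\{xy\}$ to a maximal induced matching of the $u$-side so that the surviving graph is exactly the $v$-component, whereas you remove only the edges covered by $w_1w_2$ and then invoke the fact that connected components of a well-indumatched graph are well-indumatched; your intermediate claim that $d(x,w_2)\geq 3$ can fail for $x=v$ when $uw_2\in E(G)$, but this is harmless since the bound you actually need, namely that every edge of the $v$-component is at distance at least $2$ from $w_1w_2$, still holds.
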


\begin{proof}
 Let $H$ and $K$ be the components of $G\setminus e$
 containing $u$ and $v$, respectively. Let $uxy$ be a path in $H$.
 Extend $\{xy\}$ to a maximal induced matching $M$ in $H$.
 The edges covered by $M$ are $E(H)\cup \{e\}$, thus by Lemma \ref{lem:removal},
 $K=G \setminus (E(H)\cup \{e\})$ is well-indumatched.
 \end{proof}

The following describes two forbidden structures in a well-indumatched graph which will be useful in several proofs.

\begin{lem}\label{lem:P5}
Let $G$ be a graph. Then the following statements hold:
\begin{itemize}
 \item[(i)] If $v_1v_2v_3v_4v_5$ is a path in
 $G$ such that $d(v_1)=d(v_5)=1$ and $d(v_2)=2$,
 then $G$ is not well-indumatched.
 \item[(ii)] If $v_1v_2v_3v_4v_5v_6$ is a path in
 $G$ such that $d(v_1)=d(v_6)=1$ and $d(v_2)=d(v_5)=2$,
 then $G$ is not well-indumatched.
\end{itemize}
\end{lem}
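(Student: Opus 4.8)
The plan is to prove both parts by producing two maximal induced matchings of different sizes, using Lemma \ref{lem:removal} to strip away the ``rest'' of the graph so that only the exhibited path survives, where a direct size comparison is possible.

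For part (i), consider the path $v_1v_2v_3v_4v_5$ with $d(v_1)=d(v_5)=1$ and $d(v_2)=2$. The key local observation is that the edge $v_1v_2$ is forced in a certain sense: since $v_2$ has degree $2$, its only neighbors are $v_1$ and $v_3$, so the set of edges covered by the single edge $\{v_1v_2\}$ consists of $v_1v_2$, $v_2v_3$, and all edges incident to $v_3$. First I would take $F_0=\{v_1v_2\}$ as an induced matching and let $F$ be the set of edges it covers. By Lemma \ref{lem:removal}, $G\setminus F$ is well-indumatched (assuming for contradiction that $G$ is). In $G\setminus F$ the vertices $v_4,v_5$ form the edge $v_4v_5$, and because $d(v_5)=1$ this edge is isolated from the remainder in the appropriate sense; comparing a maximal induced matching of $G\setminus F$ that uses $v_4v_5$ against the situation where instead one picks an edge covering $v_4$ from within the larger graph yields two maximal induced matchings of $G$ differing in size. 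More directly, I expect to exhibit two maximal induced matchings of $G$ explicitly: one that selects $v_2v_3$ (covering four of the five path vertices at once and leaving $v_4v_5$ to be handled), and one that selects $v_1v_2$ together with $v_4v_5$, and then argue that after extending both to maximal induced matchings of all of $G$ the sizes cannot match.

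For part (ii), the path $v_1v_2v_3v_4v_5v_6$ with $d(v_1)=d(v_6)=1$ and $d(v_2)=d(v_5)=2$ is handled the same way but with the symmetry of the configuration exploited at both ends. Here both $v_1v_2$ and $v_5v_6$ behave like forced pendant structures. I would again use Lemma \ref{lem:removal}: removing the edges covered by a suitable induced matching on the interior (for instance $\{v_3v_4\}$, or the pair at the two ends) should reduce the problem to comparing a short path against the known non-well-indumatched paths $P_5$ or $P_6$ from Lemma \ref{lem:indupath}(i). The cleanest route is probably to show that $G$ contains, after removal of covered edges, a component isomorphic to $P_5$ or $P_6$, neither of which is well-indumatched, and then invoke the fact (used in Lemma \ref{lem:longest_path}) that a graph is well-indumatched iff its components are, to conclude $G$ is not well-indumatched.

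The main obstacle, and the reason the degree-$2$ hypotheses on $v_2$ (and $v_5$) are essential, is controlling exactly which edges get covered when we strip the graph down: if $v_2$ had higher degree there would be additional edges at $v_2$ that are neither covered by $\{v_1v_2\}$ in a controlled way nor left intact, and the two competing maximal matchings could end up the same size. So the delicate step will be verifying that the exhibited induced matchings are genuinely maximal in $G$ (i.e.\ that they cover $E(G)$ after the reduction) and that the size discrepancy of one survives extension to the full graph. I expect this to follow cleanly from the covering bookkeeping guaranteed by the degree-$2$ conditions, so that the size gap of exactly one edge is forced and Lemma \ref{lem:removal} delivers the contradiction.
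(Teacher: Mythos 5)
There is a genuine gap, and it sits exactly where you flagged the ``delicate step.'' The paper's proof of (i) is a one-line exchange argument centred on the edge $v_3v_4$: because $d(v_1)=1$ and $d(v_2)=2$, every edge covered by $\{v_1v_2,v_4v_5\}$ has an endpoint equal or adjacent to $v_3$ or $v_4$, hence is covered by the single edge $v_3v_4$. Consequently, if $M$ is any maximal induced matching containing $v_3v_4$, then $(M\setminus\{v_3v_4\})\cup\{v_1v_2,v_4v_5\}$ is still an induced matching (any edge of $M\setminus\{v_3v_4\}$ close to $v_1v_2$ or $v_4v_5$ would already be close to $v_3v_4$), it has size $|M|+1$, and extending it to a maximal one gives a strictly larger maximal induced matching. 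Your ``more direct'' version swaps against $v_2v_3$ instead, and there the containment of covered sets fails: $v_4$ has unrestricted degree, so an edge incident with a neighbour $w$ of $v_4$ (with $w\neq v_3,v_5$) is covered by $v_4v_5$ but need not be covered by $v_2v_3$; such an edge can coexist with $v_2v_3$ in a maximal induced matching and then blocks the insertion of $v_4v_5$. Without the containment, ``the sizes cannot match'' simply does not follow --- two independently chosen maximal induced matchings may well have equal size. Your first route, applying Lemma \ref{lem:removal} with $F_0=\{v_1v_2\}$, has the same defect: in $G\setminus F$ the vertex $v_4$ retains every neighbour other than $v_3$, so $v_4v_5$ is not isolated and no size comparison is forced.

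For (ii), the proposed reduction to a component isomorphic to $P_5$ or $P_6$ (so as to invoke Lemma \ref{lem:indupath}) cannot work, because $v_3$ and $v_4$ have unrestricted degrees. Any induced matching $F_0$ whose covered edges you delete either leaves the extra edges at $v_3,v_4$ in place (so the path is not a component of $G\setminus F$), or else some edge of $F_0$ lies within distance $1$ of $v_3$ or $v_4$ and then typically deletes the path edges themselves; for instance, removing the edges covered by $\{v_3v_4\}$ deletes all five edges of the path. The paper's argument for (ii) is the same exchange as in (i): using $d(v_1)=d(v_6)=1$ and $d(v_2)=d(v_5)=2$, every edge covered by $\{v_1v_2,v_5v_6\}$ is covered by $v_3v_4$, so one replaces $v_3v_4$ by the pair $\{v_1v_2,v_5v_6\}$ in a maximal induced matching containing $v_3v_4$ and extends. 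Your instinct that the degree-$2$ hypotheses control the covering bookkeeping is exactly right; the missing idea is that the edge to be traded away must be $v_3v_4$, the unique edge whose covered set swallows that of the two pendant-side edges.
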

\begin{proof}
(i) Note that every edge covered by $\{v_1v_2,v_4v_5\}$ is also covered by $v_3v_4$. Thus, if $M$ is a maximal
induced matching containing $v_3v_4$, then
$(M\setminus\{v_3v_4\})\cup \{v_1v_2,v_4v_5\}$
is an induced matching and is contained
in a maximal induced matching $M'$.
Clearly, $|M'|>|M|$, so $G$ is not well-indumatched. 

 (ii) Since every edge covered by $\{v_1v_2,v_5v_6\}$ is
 also covered by $v_3v_4$, we obtain the desired result exactly in the same manner as in item (i). 
\end{proof}

A pendant edge which is adjacent to a vertex of degree $2$ is called
a \emph{good pendant edge}.
\begin{lem}\label{lem:pendants_maximal}
  Let $T$ be a reduced well-indumatched tree of order at least $5$.
  Then the set of good pendant edges forms a maximal induced matching for $T$.
\end{lem}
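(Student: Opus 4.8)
The statement has two parts: that the set $S$ of good pendant edges is an induced matching, and that it is maximal. Recall from the preliminaries that an induced matching is maximal precisely when it covers $E(T)$, so the second part amounts to showing that every edge of $T$ lies at distance at most $1$ from some good pendant edge. Throughout I write a good pendant edge as $\ell s$ with $d(\ell)=1$ and $d(s)=2$, and let $w$ denote the second neighbour of $s$; then $\ell s$ covers exactly the edges incident to $\ell$, $s$ or $w$.

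To see that $S$ is an induced matching I would first check it is a matching: since $T$ is reduced, every vertex is adjacent to at most one leaf, so two distinct good pendant edges cannot share a support, and they clearly cannot share a degree-one leaf endpoint. For the induced property, suppose two good pendant edges $\ell_1 s_1$ and $\ell_2 s_2$ were joined by an edge. As $\ell_1,\ell_2$ have degree $1$, the only possible such edge is $s_1 s_2$; but then $d(s_1)=d(s_2)=2$ forces $\ell_1 s_1 s_2 \ell_2$ to be an entire connected component isomorphic to $P_4$, contradicting that $T$ is connected of order at least $5$.

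For maximality I would fix a longest path $P=w_1\cdots w_t$. A standard longest-path argument together with the reduced hypothesis shows that every neighbour of $w_2$ other than $w_3$ is a leaf, hence $d(w_2)=2$ and, symmetrically, $d(w_{t-1})=2$; thus $w_1w_2$ and $w_{t-1}w_t$ are good pendant edges. Now Lemma \ref{lem:P5} forces a very shallow shape: reading $v_1\cdots v_5=w_1w_2w_3w_4w_5$ gives $t\neq 5$ by part (i), and $w_1\cdots w_6$ gives $t\neq 6$ by part (ii). The same two configurations constrain the branches: a branch of depth $\ge 2$ at $w_3$, say $w_3xy$ with $y$ a leaf, yields the forbidden path $w_1w_2w_3xy$ of part (i), and a leaf $z$ at $w_4$ yields the forbidden path $w_1w_2w_3w_4z$ of part (i), so (symmetrically at the far end) $w_4$ and $w_{t-3}$ carry no branch while $w_3$ and $w_{t-2}$ carry at most a single pendant leaf. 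Since a reduced tree of order at least $5$ must have $t\ge 5$, this pins $T$ down to the diameter-$6$ caterpillars of Lemma \ref{lem:longest_path} (a path $w_1\cdots w_7$ with $d(w_4)=2$ and at most one extra leaf at each of $w_3,w_5$), for which one verifies directly that $\{w_1w_2,\,w_6w_7\}\subseteq S$ already covers all eight edges.

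The crux, and the step I expect to be the main obstacle, is excluding $t\ge 8$. Here I would strip both good end-edges: since $w_1w_2$ and $w_{t-1}w_t$ lie at mutual distance $t-3\ge 5$ they form an induced matching, so by Lemma \ref{lem:removal} the forest obtained by deleting all edges they cover is again well-indumatched, and its central component contains the path $w_4\cdots w_{t-3}$. Ruling this out cleanly requires first showing, by repeated use of Lemma \ref{lem:P5}, that the interior path vertices carry no branches whatsoever, the awkward sub-case being the higher-degree branch vertices still permitted by the depth bound $\min(i-1,t-i)$; once the interior is bare, the central component is a path excluded by Lemma \ref{lem:indupath}, or one recurses via Lemma \ref{lem:cut-edge}. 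Controlling these interior branches is the delicate part of the argument.
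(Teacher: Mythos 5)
Your treatment of the induced-matching property and of the easy cases (the endpoints $w_2,w_{t-1}$ of a longest path having degree $2$, and the exclusion of $t\in\{5,6\}$ via Lemma \ref{lem:P5}) is sound, but the core of your plan --- showing that the interior of a longest path carries no branches and then ``excluding $t\ge 8$'' --- aims at a claim that is simply false, so the gap you acknowledge cannot be closed. Already for $t=7$ the spider $S_{3,2}$ (a center $c$ with three legs of length $3$) is a reduced well-indumatched tree of order $10$ whose longest path $w_1\cdots w_7$ has $d(w_4)=d(c)=3$ together with a depth-$3$ branch at $w_4$; your Lemma \ref{lem:P5} arguments exclude a leaf or a depth-$2$ branch at $w_4$, but not a depth-$3$ one, so your description of the $t=7$ case is already incomplete. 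Worse, $t\ge 8$ cannot be excluded at all: take the spider with legs of lengths $4$, $4$ and $2$, that is, a path $\ell_1 s_1 h_1 x\, h_2\, y\, h_3 s_3 \ell_3$ on $9$ vertices with an extra path $h_2 s_2 \ell_2$ attached to its center $h_2$. Its three pendant edges are good, every edge is covered by exactly one of them, and edges covered by the same pendant edge cover each other, so it is well-indumatched by Lemma \ref{lem:unique_cover}; it is reduced, has order $11$, satisfies the conclusion of the lemma, and has a longest path on $9$ vertices with an interior branch at $w_5$. Chaining more such ``hubs'' (each hub carrying a pendant path of length $2$, consecutive hubs joined through a degree-$2$ vertex) produces reduced well-indumatched trees with arbitrarily long longest paths. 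Hence no amount of care with Lemma \ref{lem:P5} will bare the interior of the path, and your reduction to Lemma \ref{lem:indupath} collapses.

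The paper avoids this trap by arguing by strong induction on the order rather than by a longest-path analysis alone. Its key step (Claim 2 of its proof) is: if two paths of length $3$ start at a vertex $v$ along distinct edges $vu_1,vu_2$, then by Lemma \ref{lem:cut-edge} the component of $T\setminus vu_i$ containing $v$ is a smaller reduced well-indumatched tree for $i=1,2$, the induction hypothesis applies to each, and the two sets of good pendant edges combine to cover all of $T$ (with one exceptional case when $d(v)=3$ and $v$ carries a pendant edge). This is exactly what disposes of the spiders and hub-chains above. Only after this splitting is exhausted --- when every vertex has degree at most $3$ and every degree-$3$ vertex either carries a pendant edge or has at most one length-$3$ path starting at it --- does the longest-path argument take over, and there Lemma \ref{lem:longest_path} legitimately forces $k\in\{1,2,3,4\}$ or $k=7$ with $d(v_4)=2$. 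To salvage your argument you need some such induction or decomposition at high-branching vertices; a pure longest-path case analysis cannot work.
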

\begin{proof}
  We prove the assertion by strong induction on $|V(T)|$. Since there is no tree of
  order $5$ which satisfies the hypothesis given in the statement of the lemma,
  the base step of the induction holds.
  Assume the result holds for all trees satisfying the hypothesis and with at most
  $n-1$ vertices, and consider a reduced well-indumatched tree $T$ of order $n$.
  Note that since $|V(T)| \geq 5$,
  the set of all good pendant edges forms an induced matching. 
  So, it suffices to show the set of all good pendant edges covers all
  edges of $T$.
  Let $v$ be a vertex of $T$ such that $d(v)\geq 3$. We prove the following two claims.
  
  \vspace{3mm}
  \noindent{\bf Claim 1.}
  Let $p$ be the number of pendant edges
  incident with $v$. Then at least $d(v)-1-p$ edges incident with $v$
  are contained in paths of length $3$ starting at $v$.
  (Note that since $T$ is reduced, $p=0$ or $p=1$.)
  
  {\it Proof of Claim $1$.} 
  By definition of $p$, $d(v)-p$ edges incident with $v$ are not 
  pendant edges and thus, are contained in paths of length $2$ starting at
  $v$. Suppose that two of these paths, say $vxx'$ and $vyy'$,
  are not contained in paths of length $3$ starting at $v$.
  This implies that $d(x')=d(y')=1$ and by Part (a) of Lemma \ref{lem:P5},
  $x$ should have a neighbor $z$ other than $v$ and $x'$. 
  Since $T$ is reduced,
  $z$ has a neighbor other than $x$, say $z'$.
  Now, $vxzz'$ is a path of length $3$, a contradiction.
  So, at most one of the $d(v)-p$ non-pendant edges incident with $v$
  is not contained in a path of length $3$ starting at $v$, and the
  claim follows.
  
  \vspace{3mm}
  \noindent{\bf Claim 2.} If there are two edges $vu_1$ and $vu_2$
  which are contained in paths of length $3$ starting at $v$, then
  the set of good pendant edges covers all edges of $T$,
  unless $d(v)=3$ and
  $v$ is incident with a pendant edge.
  
  {\it Proof of Claim $2$.} For $i=1,2$,
  let $T_i$ be the component of $T\setminus vu_i$ containing $v$.
  Then by Lemma \ref{lem:cut-edge}, $T_i$ is well-indumatched. Also, since $d(v) \geq 3$, $T_i$ is
  reduced and has at least $5$ vertices.
  Therefore, by the induction hypothesis,
  the set of good pendant edges of $T_i$,
  say $M_i$, covers all edges of $T_i$. Note
  that good pendant edges of $T_1$ and
  $T_2$ are good pendant edges of $T$, unless
  $d(v) = 3$ and $v$ is incident with a
  pendant edge.
  Since we have excluded the case $d(v)=3$ and $v$ is incident with a pendant edge
  in the statement of the Claim $2$,
  $M_1\cup M_2$ is
  a set of good pendant edges which  covers all edges of $T$ and the claim 
  is proved.
  
  \vspace{3mm}
  
  Now, let $v$ be an arbitrary vertex of $T$. If $d(v)\geq 4$, then
  by Claim $1$ there are two edges incident with $v$ which are contained
  in paths of length $3$ starting at $v$, and so Claim $2$ implies the result.
  Therefore, we can assume that for every $v\in V(T)$, $d(v)\leq 3$ and if $d(v)=3$, then
  either $v$ is incident with a pendant
  edge or there is at most one path of length $3$ starting
  at $v$. 
  
  Let $P=v_1v_2\cdots v_k$ be a longest path
  in $T$. Therefore $d(v_1)=d(v_k)=1$. If there is a vertex 
  $x\in V(T)\setminus V(P)$ such that $v_{k-1}x \in E(T)$,
  then since $T$ is reduced, there is a vertex
  $y\in V(T)\setminus(V(P) \cup \{x\})$ such that
  $xy \in E(T)$. Now, $v_1\cdots v_{k-1}xy$ is a path longer than $P$, which is impossible. 
  Thus, $d(v_{k-1})=2$ and similarly $d(v_2)=2$.
  By Lemma \ref{lem:P5}, there is no path of length $2$
  in $T\setminus E(P)$ starting at $v_3$ or $v_{k-2}$,
  and since $P$ is a longest path of $T$, there is no such path of 
  length at least $3$. 
  So $d(v_3),d(v_{n-2}) \in \{2,3\}$ and
  if $d(v_3) = 3$ (resp. $d(v_{n-2})=3$) then
  $v_2$ (resp. $v_{n-2}$) is incident with a pendant
  edge. Also, for 
  $3<i<n-2$, since there are two paths of length
  at least $3$ starting at $v_i$, either $d(v_i)=2$
  or $d(v_i)=3$ and $v_i$ is incident with a pendant edge.
  Therefore by Lemma \ref{lem:longest_path},
  $k\in \{1,2,3,4\}$ or $k=7$ and $d(v_4)=2$. In
  all of these cases, it is not hard to see that the set of good
  pendant edges covers all edges of $T$.	
\end{proof}

\begin{lem}\label{lem:unique_cover}
 Let $G$ be a graph and $M$ be a matching of $G$ which satisfies
 the following properties:

 i) Each edge of $G$ is covered by exactly one edge of $M$,

 ii) If $e_1,e_2\in E(G)$ are covered by $e_0\in M$, then
    $e_1$ covers $e_2$.\\
Then, $G$ is well-indumatched.
\end{lem}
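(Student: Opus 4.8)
The plan is to show that every maximal induced matching of $G$ has size exactly $|M|$. Write $M=\{f_1,\dots,f_t\}$ and, for each $i$, let $C_i$ denote the set of edges of $G$ covered by $f_i$. By property (i) the sets $C_1,\dots,C_t$ partition $E(G)$, so these ``covering classes'' are the natural objects to count an induced matching against.

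First I would establish the upper bound: any induced matching $N$ satisfies $|N\cap C_i|\le 1$ for every $i$, and hence $|N|\le t$. Indeed, if two distinct edges $e_1,e_2\in N$ both lay in the same $C_i$, then property (ii) would force $e_1$ to cover $e_2$, i.e.\ the distance between them would be at most $1$; this contradicts the fact that the edges of an induced matching are pairwise at distance at least $2$. Thus at most one edge of $N$ falls in each class.

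Next I would prove the matching lower bound for a \emph{maximal} $N$. Two observations drive this. First, since $M\subseteq E(G)$, each $f_i$ is itself an edge of $G$ lying in its own class $C_i$: it covers itself, and by the uniqueness in (i) it is covered by no other edge of $M$. Second, the covering relation is symmetric, since it is defined through the (symmetric) distance between edges. Now, because $N$ is maximal, it covers $E(G)$ (as recalled in Section \ref{sec:prem}); in particular some $g\in N$ covers $f_i$. By symmetry $f_i$ covers $g$, so $f_i$ is an edge of $M$ covering $g$, and the uniqueness in (i) gives $g\in C_i$. Hence $N\cap C_i\neq\emptyset$ for every $i$. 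Combined with the upper bound, this yields $|N\cap C_i|=1$ for each $i$, so $|N|=t=|M|$.

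Since every maximal induced matching then has size $|M|$, the graph $G$ is well-indumatched. The step needing the most care is the symmetry argument in the lower bound: it is what lets me pull the covering edge $g\in N$ back into the class $C_i$ of $f_i$, and it is used in tandem with the equivalence between maximality and covering all of $E(G)$. I expect the division of labour between the two hypotheses to be the clarifying point: property (ii) yields only the upper bound $|N|\le t$, whereas property (i), through uniqueness and symmetry, forces every class to be hit and thus pins the size down to exactly $t$.
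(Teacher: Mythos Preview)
Your proof is correct and follows essentially the same approach as the paper: the paper builds an explicit bijection $f:M'\to M$ sending each $e\in M'$ to the unique edge of $M$ covering it, with injectivity coming from property (ii) and surjectivity from maximality together with the symmetry of covering and the uniqueness in (i)---exactly the two halves of your counting argument on the classes $C_i$. The only difference is cosmetic (bijection versus partition counting), and your write-up is in fact slightly more explicit than the paper about invoking the symmetry of the covering relation in the surjectivity/lower-bound step.
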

\begin{proof}
 Let $M'$ be a maximal induced matching of $G$. We construct a one-to-one
 correspondence between $M'$ and $M$.
 Let $e\in M'$. By (i), there is exactly one $e_0\in M$ which covers $e$.
 Define $f(e)=e_0$. We claim that $f:M' \to M$ is a bijection from $M'$ onto
 $M$. If $e_1,e_2\in M'$ and $e_1\neq e_2$, but $f(e_1)=f(e_2)$,
 then by (ii), $e_1$ covers $e_2$ which contradicts
 the fact that $M'$ is an induced matching. So, $f$ is one-to-one. On the other
 hand, for each $e_0 \in M$, since $M'$ is a maximal induced matching,
 there is $e\in M'$ which covers $e_0$, so by the definition $f(e)=e_0$.
 Thus $f$ is onto.
 Therefore, the size of any maximal induced matching of $G$
 equals $|M|$ which implies that $G$ is well-indumatched.
\end{proof}

\begin{thm}\label{thm:tree}
 Let $T$ be a reduced tree of order at least $5$, and
 $M$ be the set of good pendant edges of $T$.
 Then, $T$ is well-indumatched if and only if
 each edge of $T$ is covered by exactly one edge of $M$.
\end{thm}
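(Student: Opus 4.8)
The plan is to prove the two implications separately, both resting on one elementary observation about good pendant edges. Let $e_0 = uv$ be a good pendant edge, with $v$ the pendant vertex and $u$ the vertex of degree $2$, and let $w$ be the unique neighbor of $u$ other than $v$. Since $v$ is a leaf, the distance from $e_0$ to any other edge is realized through $u$, so an edge $f$ of $T$ is covered by $e_0$ if and only if $f$ is incident with $u$ or with $w$. I would record this fact first, as it drives both directions.

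For the implication assuming the covering condition, I would invoke Lemma~\ref{lem:unique_cover} with $M$ the set of good pendant edges. Note that $M$ is a matching because $T$ is reduced, so no vertex is incident with two pendant edges. Condition (i) of Lemma~\ref{lem:unique_cover} is exactly the hypothesis. Condition (ii) is then immediate from the observation: every edge covered by $e_0 = uv$ meets the adjacent pair $\{u,w\}$, and any two edges each meeting $\{u,w\}$ are at distance at most $1$, hence cover each other. Thus Lemma~\ref{lem:unique_cover} gives that $T$ is well-indumatched.

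For the converse, suppose $T$ is well-indumatched. By Lemma~\ref{lem:pendants_maximal}, $M$ is a maximal induced matching, hence it covers every edge of $T$, so each edge is covered by at least one good pendant edge. It remains to forbid an edge covered by two. Assume that $f$ is covered by distinct good pendant edges $e_0 = uv$ and $e_0' = u'v'$, with associated neighbors $w$ and $w'$. As $e_0, e_0'$ lie in an induced matching they are at distance at least $2$, which rules out $u = u'$, $u = w'$ and $w = u'$. By the observation, $f$ meets $\{u,w\}$ and $\{u',w'\}$; a short case analysis on where the endpoints of $f$ lie, split according to whether $w = w'$, reconstructs a path from $v$ to $v'$ whose two ends are leaves and whose two penultimate vertices $u, u'$ have degree $2$. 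This is precisely a forbidden configuration of Lemma~\ref{lem:P5}: a $P_5$ of type (i) in every case except when $w \neq w'$ and $f = ww'$, which yields a $P_6$ of type (ii). The only endpoint combinations not producing such a path force $u$ and $u'$ adjacent, contradicting the distance-$2$ condition. Hence each edge is covered by exactly one good pendant edge.

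The two verifications built on the covering observation are routine; the main obstacle is the converse's case analysis. There one must keep track of the coincidence $w = w'$ and confirm that the five- or six-vertex sequence produced is a genuine path in $T$ with the exact degree pattern required by Lemma~\ref{lem:P5}. This is where being in a tree helps decisively: since $T$ is acyclic and the two pendant edges are at distance at least $2$, all the relevant vertices are automatically distinct, so the bookkeeping stays finite and short.
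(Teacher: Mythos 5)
Your proposal is correct and follows essentially the same route as the paper: the forward direction applies Lemma~\ref{lem:unique_cover} after observing that the edges covered by a good pendant edge $uv$ are exactly those meeting $\{u,w\}$, and the converse combines Lemma~\ref{lem:pendants_maximal} with the forbidden $P_5$/$P_6$ configurations of Lemma~\ref{lem:P5}. Your case analysis for the converse is in fact slightly more explicit than the paper's, which simply asserts the existence of the forbidden path.
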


\begin{proof}
 Assume that every edge of $T$ is covered by exactly one edge of $M$. If $e=xy \in M$ and $z$ is the other neighbor of $y$, then the set of edges covered by $e$ consists of $e$
 and all edges incident with $z$. So the second
 condition of Lemma \ref{lem:unique_cover} also holds and therefore $T$ is well-indumatched.
 Conversely, let $T$ be well-indumatched. By Lemma \ref{lem:pendants_maximal},
 $M$ is a maximal induced matching of $T$, so every edge of $T$ is covered
 by at least one edge of $M$. If two edges of $M$ cover an edge of $T$, then
 $T$ has either a path $v_1v_2v_3v_4v_5$ such that $d(v_1)=d(v_5)=1$ and
 $d(v_2)=d(v_4)=2$ or a path $v_1v_2v_3v_4v_5v_6$ such that $d(v_1)=d(v_6)=1$ and
 $d(v_2)=d(v_5)=2$ and by Lemma \ref{lem:P5}, $T$ is not well-indumatched,
 a contradiction.
\end{proof}

In \cite{Golumbic2000}, a linear time algorithm for finding the maximum size of an
 induced matching in a tree is presented. Besides, a linear time algorithm is given for finding the size of a minimum maximal induced matching in a tree in \cite{Lepin2006}. These two algorithms provide a linear time
 algorithm for recognizing whether a tree is well-indumatched.
Our structural characterization of well-indumatched trees in Theorem \ref{thm:tree}, 
provides a more straightforward linear time recognition algorithm.

\begin{cor}
Given a tree $T$, it can be decided in linear time if $T$ is well-indumatched. 
\end{cor}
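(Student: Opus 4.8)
The plan is to turn the structural characterization in Theorem \ref{thm:tree} into an explicit linear-time procedure, handling small trees by a trivial case check and reduced trees of order at least $5$ by directly verifying the covering condition. First I would observe that, by Remark \ref{rem:reduced}, deciding well-indumatchedness of $T$ is equivalent to deciding it for its reduction $R(G)$, and that the reduction can be computed in linear time: a tree is reduced precisely when each vertex carries at most one pendant edge, so a single pass that, for every vertex, deletes all but one of its pendant neighbors produces $R(T)$ in $O(n)$ time. If $R(T)$ has fewer than $5$ vertices, it is one of finitely many small trees whose status can be read off directly (for instance via Lemma \ref{lem:indupath}(i) together with the fact that disconnected well-indumatched components must each be well-indumatched), so assume from now on that $T$ is reduced with $|V(T)| \geq 5$.

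Next I would compute the set $M$ of good pendant edges and check the condition of Theorem \ref{thm:tree}, namely that every edge of $T$ is covered by exactly one edge of $M$. To do this in linear time, I would first compute the degree $d(v)$ of every vertex in one pass, then mark $M$ as the set of pendant edges $xy$ (with $d(y)=1$) whose non-pendant endpoint $x$ has $d(x)=2$. For the covering count I would, for each edge $e_0 = xy \in M$, increment a counter on $e_0$ itself and on every edge incident with the far neighbor $z$ of $x$ (the description of the covered edges of a good pendant edge is exactly the one used in the proof of Theorem \ref{thm:tree}). Since each $e_0 \in M$ touches only the edges around a single vertex $z$, the total work across all of $M$ is proportional to $\sum_{z} d(z) = O(m) = O(n)$, because $T$ is a tree. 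Finally I would scan all edges once more and accept if and only if every edge has cover-count exactly $1$; by Theorem \ref{thm:tree} this correctly decides whether $T$ is well-indumatched.

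The main obstacle, and the only point requiring care, is bounding the cost of the covering computation so that it stays linear rather than quadratic. A naive implementation that, for every edge of $T$, searches through all of $M$ to find covering edges would be $O(nm)$; the key is to push the bookkeeping outward from the edges of $M$ rather than inward from the edges of $T$, using the explicit local description of the covered set of a good pendant edge. Because each good pendant edge covers only the edges incident to one identified vertex and these local neighborhoods sum to $O(n)$ in a tree, the amortized argument closes. I would also note that correctness for reduced trees is immediate from Theorem \ref{thm:tree} and for arbitrary trees from Remark \ref{rem:reduced}, so no further structural results are needed beyond what the section already establishes.
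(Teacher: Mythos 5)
Your proposal follows essentially the same route as the paper: compute the reduction, form the set $M$ of good pendant edges, verify the ``covered exactly once'' condition of Theorem \ref{thm:tree}, and conclude via Remark \ref{rem:reduced}; you are in fact a bit more careful than the paper in separately disposing of reduced trees of order less than $5$, to which Theorem \ref{thm:tree} does not apply. There is, however, one slip in the time analysis: the amortized bound you give for the covering pass, $\sum_z d(z)=O(m)$, is not literally correct, because a single vertex $z$ can be the far neighbor of up to $d(z)$ distinct good pendant edges (take $K_{1,r}$ with every edge subdivided once: the center $z$ is the far neighbor of all $r$ good pendant edges, and the tree is reduced), so the work as you describe it is $\sum_z c(z)\,d(z)$, which can be quadratic in $n$. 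This is easy to repair and does not affect correctness: either abort as soon as any edge counter reaches $2$ (then at most $2m$ increments are ever performed before acceptance or rejection), or first record for each vertex $z$ the number $c(z)$ of good pendant edges whose far neighbor is $z$ and read off the cover count of an edge $uv$ as $c(u)+c(v)+[uv\in M]$ in constant time per edge. I note that the paper's own sketch (iterating over the neighbors of the endpoints of each edge) is no more precise on this point and admits the same trivial fix.
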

\begin{proof}
First, obtain $T'=R(T)$ in linear time. Let $M$ be the set of all good 
pendent edges of $T'$.  Clearly, $M$ can be formed in linear time just by 
checking the degrees of the parents of the leaves. Now, for each edge $e$ 
of $T'$, determine the edge(s) of $M$ covering $e$; this can be done by 
checking whether a neighbor of an end-point of $e$ appears in $V(M)$, 
thus requires a time proportional to the sum of the degrees, which is linear. 
By Theorem \ref{thm:tree}, $T'$ is well-indumatched if and only if each 
edge of $T'$ is covered exactly once. We conclude by Remark \ref{rem:reduced},
since $T$ is well-indumatched if and only if $T'$ is  well-indumatched. 
\end{proof}

\section{Well-indumatched Graphs of Bounded Girth}\label{sec:unicyclic}

In this section, we study well-indumatched graphs with lower bounded girth. 
We characterize all minimal well-indumatched graphs of girth at least 9. 
An important consequence of this characterization is that for an odd 
integer $g\geq 9$ and $g\neq 11$, there is no well-indumatched graph of 
girth $g$. It turns out that 
minimal well-indumatched graphs of girth at least 9 are (in particular) 
unicyclic. A \emph{unicyclic} graph is a connected graph with a unique cycle. 
Unlike for odd girth at least 9 (and not equal to 11), we show that there are 
infinitely many well-indumatched unicyclic graphs of even girth, or odd girth 
smaller than 9. We also show that there are infinitely many well-indumatched 
trees and infinitely many well-indumatched $r$-regular graphs of girth 3,
where $r\geq 3$ is an arbitrary integer. 

Let $G$ be a unicyclic graph and $C$ be the unique cycle of $G$.
For each $v\in V(C)$, the rooted tree in $G\setminus E(C)$ with root $v$
is denoted by $T_v$.
If $T$ is a rooted tree with root $v$, then the \emph{depth}
of $T$ is the longest length of a path starting at $v$. 
In Figure \ref{fig:tree_types}, two types of rooted trees which are
encountered in following results are shown. A graph $G$ is said to be {\em minimal} well-indumatched with property $\mathcal P$ if $G$ is a well-indumatched graph with property $\mathcal P$ and has no
proper well-indumatched subgraph with property $\mathcal P$.

\begin{lem}\label{lem:types}
If $G$ is a minimal well-indumatched graph of girth $g\geq 9$, then $G$ is a reduced unicyclic graph. Moreover, if the
unique cycle of $G$ is
$C = v_1\cdots v_g$, then every $T_{v_i}$ is of one of the Types (i) or (ii) in Figure \emph{\ref{fig:tree_types}}. 
\end{lem}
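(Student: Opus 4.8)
The plan is to establish the three asserted properties in order---reduced, then connected and unicyclic, then the shape of the hanging trees---each time invoking minimality by exhibiting a strictly smaller well-indumatched subgraph that still has girth at least $9$, which is impossible. The two mechanisms for shrinking $G$ are Lemma \ref{lem:removal} (deleting all edges covered by an induced matching preserves well-indumatchedness) and Lemma \ref{lem:cut-edge} (peeling off a component across a cut-edge), combined with the elementary observation that passing to a subgraph never decreases the girth, so the property ``girth $\geq 9$'' is automatically inherited.

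First I would show $G$ is reduced. If two vertices $x,y$ shared a neighborhood, then by the argument in Remark \ref{rem:reduced} deleting $y$ preserves well-indumatchedness. Since $x$ and $y$ have identical neighborhoods, any shortest cycle passing through $y$ reroutes through $x$ with no change in length, so $G-y$ still contains a cycle of length $g$ and has girth exactly $g\geq 9$. Thus $G-y$ would be a proper well-indumatched subgraph of girth $\geq 9$, contradicting minimality; hence $G=R(G)$ is reduced. Connectedness follows by the same template: a graph is well-indumatched if and only if each of its components is, so a disconnected $G$ would have the component carrying a shortest cycle as a proper well-indumatched subgraph of girth $g\geq 9$, again a contradiction.

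Next, and this is where the real difficulty lies, I would prove $G$ is unicyclic. Suppose its cyclomatic number is at least $2$, so $G$ contains two independent cycles. The idea is to locate a single edge $e$ on one cycle whose covered set $F$ (the edges at distance $\leq 1$ from $e$) can be removed via Lemma \ref{lem:removal} while a cycle of length $\geq 9$ survives; then $G\setminus F$ is a proper well-indumatched graph of girth $\geq 9$, the required contradiction. When the two cycles are vertex-disjoint or far apart this is immediate, since $F$ lies in a ball of radius $2$ about $e$ whereas the other cycle, of length $\geq 9$, lies outside it. The delicate case---and the main obstacle---is when the cycles share vertices or a common path (theta- or figure-eight-type configurations). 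There one must use girth $\geq 9$ to guarantee that after deleting $F$ a cycle of length still at least $9$ persists, by choosing $e$ on the ``private'' part of one cycle and, where needed, applying Lemma \ref{lem:cut-edge} to a suitable cut-edge. Executing this case analysis so that one always certifies both the survival of a long cycle and the nonemptiness of the removal is the crux of the lemma.

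Finally, with $G$ unicyclic and $C=v_1\cdots v_g$ its unique cycle, I would analyze each rooted tree $T_{v_i}$ in $G\setminus E(C)$. Since $C$ is the only cycle, every edge of $T_{v_i}$ is a cut-edge of $G$; moreover $v_i$ lies on $C$ and so is the endpoint of a path (for instance $v_iv_{i+1}v_{i+2}$) of length $\geq 2$ avoiding any fixed child $w$ of $v_i$. Hence Lemma \ref{lem:cut-edge} shows that the subtree hanging below $w$ is a well-indumatched tree, which is reduced because $G$ is. By Theorem \ref{thm:tree} and Lemma \ref{lem:longest_path}, such a tree has tightly bounded depth and shape. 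To exclude the remaining larger possibilities I would apply Lemma \ref{lem:P5}: a branch of depth $\geq 3$ at $v_i$, or deep branches at two nearby cycle vertices, yields an induced $P_5$ or $P_6$ exhibiting the forbidden degree pattern---and here girth $\geq 9$ is precisely what guarantees that the relevant path is induced and that its interior vertices have degree $2$. Minimality then removes any inessential pendant material, forcing each $T_{v_i}$ to coincide with one of the two rooted shapes of Figure \ref{fig:tree_types}.
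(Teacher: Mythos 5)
Your outline captures the right general mechanism (shrink $G$ via Lemma \ref{lem:removal} and contradict minimality), and the ``reduced'' step matches the paper. But there are two genuine gaps. First, the unicyclicity argument is precisely the part you defer rather than execute: you acknowledge that the case of two cycles sharing vertices or a path is ``the crux'' and leave the case analysis unspecified. The paper avoids this case analysis entirely by first extracting a much stronger consequence of minimality: \emph{every edge of $G$ is covered by some edge of the girth cycle $C$} (otherwise an uncovered edge $e$ and its covered set could be deleted, leaving $C$ intact). From this single fact everything follows cheaply: any path in $G\setminus E(C)$ between two vertices of $C$ would close a cycle forcing it to have length at least $\lceil g/2\rceil\geq 5$, hence a middle edge not covered by $C$; and any cycle inside a component $H_i$ of $G\setminus E(C)$ would force many vertices adjacent to $v_i$ and hence a cycle of length at most $4$. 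Without this covering observation, your planned edge-by-edge surgery on theta-like configurations is doable but substantially harder, and as written it is not a proof.

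Second, and more seriously, your final step does not actually reach ``Type (i) or (ii).'' After bounding the depth of each $T_{v_i}$ by $2$, one must still rule out $T_{v_i}=P_2$ (a single pendant edge at $v_i$) and configurations with several legs. Your phrase ``minimality then removes any inessential pendant material'' is not a valid argument: well-indumatchedness is not hereditary, so deleting a pendant edge need not preserve it, and a pendant edge is not the covered set of any induced matching, so Lemma \ref{lem:removal} does not apply to it. The paper's exclusion of $T_{v_i}=P_2$ is the most delicate part of the proof: it deletes a cycle edge $v_{i+5}v_{i+6}$ together with its covered set, obtains a well-indumatched forest, applies Theorem \ref{thm:tree} to the component containing $v_i$ to force the pendant edge to be covered by a good pendant edge, and then propagates this constraint around the cycle separately for $g\geq 11$, $g=9$ and $g=10$ until reaching a non-well-indumatched configuration. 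None of this is present in your proposal. Relatedly, your plan to kill depth-$3$ branches via Lemma \ref{lem:P5} requires degree-$1$ and degree-$2$ vertices in specific positions, which an arbitrary deep branch need not supply; the paper instead deletes the edge $x_2x_3$ at depth $2$--$3$ together with its covered set, which leaves $C$ untouched and contradicts minimality directly.
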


\begin{proof}
Let $C = v_1v_2\cdots v_gv_1$ be a cycle in $G$.
If there is $e\in E(G)$ covered by no edge
of $C$, then by removing $e$ and the edges covered by $e$,
we obtain a well-indumatched graph of girth $g$ with fewer edges than $G$,
contrary to the assumption. So every edge of $G$ is covered by some edge
of $C$. We claim that in $G\setminus E(C)$, 
the vertices $v_1,v_2,\ldots, v_g$ belong
to different components. Suppose not, and let $P$ be a path of minimum length
in $G\setminus E(C)$ between two vertices of $C$.
Let $v_i$ and $v_j$ be the
end vertices of $P$ and $uv_i$ and $wv_j$ be the
first and the last edges of $P$. Since $P$ has minimum length,
it has no common vertex
with $C$ except $v_i$ and $v_j$. 
Suppose that there is an edge $e=xy$ 
between $x \in V(P)$ and $y \in V(C)$ other than
$uv_i$ and $wv_j$. If $x \notin \{u,w\}$, then there is
a shorter path in $G\setminus E(C)$ between 
two vertices of $C$, which is impossible. Also, if 
$x=u$ or $x=w$, then $G$ has a cycle of size smaller than $g$.
So, there is no edge between $V(P)$ and $V(C)$ except that
first and the last edges of $P$. 
Let $Q$ be a shortest path on
$C$ between $v_i$ and $v_j$. Thus, the length of $Q$ is at most
$\lfloor \frac{g}{2} \rfloor$. Now, since $P\cup Q$ is a cycle,
the length of $P$ is at least
$\lceil \frac{g}{2} \rceil \geq 5$. But then $P$ has an edge which is not
covered by any edge of $C$, a contradiction.

So let $H_i$ be the component of $G\setminus E(C)$ containing
$v_i$, $1 \leq i \leq g$. If $H_i$ has a cycle, which should be of length at least $9$,
then $H_i$ has a path of length at least $7$ which does not contain $v_i$,
say $u_1,u_2\cdots u_8$. Since each edge of $G$ should be covered
by at least one edge of $C$, for each $j$, there is at least one edge
between $v_i$ and $\{u_j, u_{j+1}\}$,
which implies that $H_i$
has a cycle of length at most $4$ (by considering 3 consecutive edges 
$u_1u_2, u_2u_3, u_3u_4$), a contradiction to the girth assumption. 
Thus, each $H_i$ is a tree and $G$ is unicyclic.
	
If $G$ is not reduced, then by Remark \ref{rem:reduced}, $R(G)$ is a
proper well-indumatched subgraph of
 $G$ whose girth is $g$, a contradiction. So $G$ is reduced.
For some $v\in V(C)$,
 if the depth of $T_v$ is at least $3$ and $vx_1x_2x_3$ is a path in $T_v$,
 then by removing $x_2x_3$ and all edges covered by $x_2x_3$ we obtain
 a well-indumatched unicyclic graph with fewer edges, a contradiction.
 Thus, for each $v\in V(C)$ the depth of $T_v$ is at most $2$.
 
  Now, we prove that for
 each $i$, $T_{v_i}\neq P_2$. If $T_{v_i}=P_2$, for some $i$, then remove
 $v_{i+5}v_{i+6}$
 and all edges covered by $v_{i+5}v_{i+6}$. By Lemma \ref{lem:removal},
 the resulting graph is a well-indumatched forest.
 Let $T$ be the component of this forest containing $v_i$. By Theorem
 \ref{thm:tree},
 the single edge of $T_{v_i}$ should be covered by a good pendant edge
 of $R(T)$.
 If $g \geq 11$, then it is clearly impossible.
 Let $g=9$. For the single edge of $T_{v_i}$ to be
 covered by a good pendant edge of $R(T)$, it is
 necessary that $T_{v_{i-1}} = P_2$. By a similar argument,
 $T_{v_{i-2}} = P_2$, and continuing in this way, we conclude
 that $T_{v_j}$ is $P_2$ for all $j$. However, it is not hard to see that
 the resulting graph is not well-indumatched.
 Now, let $g=10$. For the single edge of $T_{v_i}$ to be
 covered by a good pendant edge of $R(T)$, it is necessary
 that
 $T_{v_{i-2}} = P_1$ and $T_{v_{i-1}} = P_1$ or $P_2$.
 Similarly, if one removes $v_{i-5}v_{i-6}$ and the edges covered by it, then it yields that $T_{v_{i+2}} = P_1$ and $T_{v_{i+1}} = P_1$ or $P_2$.
 Since $v_{i+1}v_{i+2}$ should be covered by a good
 pendant edge of $R(T)$, $T_{v_{i+3}} = P_2$
 (if $T_{v_{i+3}}=P_1$, then the edge $v_iv_{i+1}$ is covered by two good pendant edges of $R(T)$, which
 contradicts Theorem \ref{thm:tree}).
 By repeating this argument we conclude that 
 $T_{v_{i+3k}} = P_2$, for every $k$. So, each $T_{v_j}$ is $P_2$. However, it is not hard to see that the resulting graph is not well-indumatched, a contradiction. Now, the minimality of $G$ implies that for each $i$, $ T_{v_i}$ is of Type (i) or (ii) shown in Figure 2.

\end{proof}

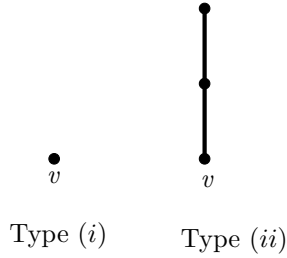
\begin{figure}
  \centering
  \begin{tikzpicture}[line cap=round,line join=round,>=triangle 45,x=1.0cm,y=1.0cm]
\clip(-1.,-2.) rectangle (4.,2.5);
\draw [line width=1.6pt] (2.,0.)-- (2.,1.);
\draw [line width=1.6pt] (2.,1.)-- (2.,2.);
\draw (-0.22,-0.04) node[anchor=north west] {${\bf \it v}$};
\draw (1.82,-0.1) node[anchor=north west] {${\bf \it v}$};
\draw (-0.72,-0.72) node[anchor=north west] {Type $(i)$};
\draw (1.56,-0.78) node[anchor=north west] {Type $(ii)$};
\begin{scriptsize}
\draw [fill=black] (0.,0.) circle (2.0pt);
\draw [fill=black] (2.,0.) circle (2.0pt);
\draw [fill=black] (2.,1.) circle (2.0pt);
\draw [fill=black] (2.,2.) circle (2.0pt);
\end{scriptsize}
\end{tikzpicture}
  \caption{Two types of rooted trees}\label{fig:tree_types}
\end{figure}

\begin{lem} \label{lem:girth9_11}The following statements hold:
\begin{itemize}
 \item[(i)] Let $G$ be a reduced well-indumatched unicyclic graph of girth $11$ with the unique cycle $C=v_1v_2\cdots v_{11}v_1$, such that for each $i$, $T_{v_i}$ is one of the two rooted trees shown
 in Figure \emph{\ref{fig:tree_types}}, then $G=C_{11}$.
  \item[(ii)]  There is no well-indumatched graph of girth $9$.
\end{itemize}
\end{lem}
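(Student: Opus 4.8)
The plan is to prove both parts by a single mechanism. For part (i) the hypotheses already put $G$ in the required form; for part (ii) I first note that if some well-indumatched graph of girth $9$ existed, then a well-indumatched subgraph of girth $9$ with fewest edges would be minimal, and Lemma~\ref{lem:types} would present it as a reduced unicyclic graph with cycle $C=v_1\cdots v_g$ ($g=9$) in which each $T_{v_i}$ is of Type (i) or (ii). In both cases I encode the graph by a cyclic word $t_1\cdots t_g\in\{1,2\}^g$ (indices modulo $g$): $t_i=1$ means $T_{v_i}$ is Type (i), so $v_i$ has only its two cycle-neighbours, while $t_i=2$ means $T_{v_i}$ is Type (ii), so $v_i$ carries a pendant path $v_i a_i b_i$.

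The engine is the now-familiar reduction to a tree. For each cycle edge $v_jv_{j+1}$, deleting it and all edges it covers removes exactly the five cycle edges $v_{j-2}v_{j-1},v_{j-1}v_j,v_jv_{j+1},v_{j+1}v_{j+2},v_{j+2}v_{j+3}$ and separates off $v_{j-1},v_j,v_{j+1},v_{j+2}$ together with any pendant paths they carry. By Lemma~\ref{lem:removal} the remaining component $T_j$ is well-indumatched; its spine is the path $w_1\cdots w_{g-4}=v_{j+3}\cdots v_{j+g-2}$ with the surviving Type-(ii) pendant paths attached. One checks that $T_j$ is reduced (deleting edges produces no two leaves with a common neighbour here, and otherwise one passes to $R(T_j)$ via Remark~\ref{rem:reduced}) and has at least $5$ vertices, so Theorem~\ref{thm:tree} applies: the good pendant edges of $T_j$ cover each edge exactly once.

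Next I would identify these good pendant edges and their coverage. They are the pendant paths $a_kb_k$ of the interior Type-(ii) spine vertices, each covering precisely the two spine edges incident with $w_k$, together with a spine-end edge $w_1w_2$ whenever $t_{w_1}=t_{w_2}=1$ (this edge is good and covers $w_1w_2,w_2w_3,w_3w_4$) and symmetrically at the far end. Writing $h_i=w_iw_{i+1}$, $\beta_k$ for the indicator of $t_{w_k}=2$, and $L,R$ for the indicators that the two spine-end edges are good pendant edges, the exact-cover condition applied to $h_2$ and $h_3$ reads $\beta_2+\beta_3+L(+R)=1$ and $\beta_3+\beta_4+L(+R)=1$, where the term $R$ is present for $g=9$ (when the short spine lets the far end reach the middle) and absent for $g=11$. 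Subtracting the two equations cancels $\beta_3$, $L$ and $R$ and gives $\beta_2=\beta_4$, that is $t_{j+4}=t_{j+6}$. The step I expect to be the main obstacle is exactly this bookkeeping at the two spine ends: one must run the case split on whether $w_1,w_2$ (and the mirror pair) are of Type (i) or (ii), confirm that $h_1$ is always covered exactly once, and verify that an active end-pendant does not additionally land on some tree edge $w_ka_k$ (which would break the exact cover); only after pinning down the correct coverage counts for $h_2$ and $h_3$ is the identity $\beta_2=\beta_4$ justified.

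Finally I would let $j$ range over $\{1,\dots,g\}$, turning $t_{j+4}=t_{j+6}$ into $t_m=t_{m+2}$ for all $m$. Since $g\in\{9,11\}$ is odd, $\gcd(2,g)=1$, so $m,m+2,m+4,\dots$ run through all of $\mathbb{Z}_g$ and all $t_m$ coincide. If every $t_m=2$, then consecutive Type-(ii) vertices give the path $b_1a_1v_1v_2a_2b_2$ with end-vertices of degree $1$ and $a_1,a_2$ of degree $2$, so $G$ is not well-indumatched by Lemma~\ref{lem:P5}(ii); hence every $t_m=1$ and $G=C_g$. For $g=11$ this is the assertion of part (i), and $C_{11}$ is indeed well-indumatched by Lemma~\ref{lem:indupath}. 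For $g=9$ it forces $G=C_9$, which is not well-indumatched by Lemma~\ref{lem:indupath}, contradicting the choice of $G$; therefore no well-indumatched graph of girth $9$ exists, which is part (ii).
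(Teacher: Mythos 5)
Your proof is correct, and its skeleton is the same as the paper's: cut the cycle at an edge $v_jv_{j+1}$, invoke Lemma~\ref{lem:removal} to get a well-indumatched tree whose spine is $v_{j+3}\cdots v_{j-2}$, apply the exact-cover characterization of Theorem~\ref{thm:tree}, extract a period-2 relation on the types, and contradict the odd cycle length (with Lemma~\ref{lem:P5}(ii) disposing of the all-Type-(ii) case and Lemma~\ref{lem:indupath} of $C_9$). Where you genuinely diverge is in how the period-2 relation is extracted. The paper assumes some $T_{v_j}$ is of Type (ii), argues that the unique good pendant edge covering $v_{j+1}v_{j+2}$ forces $T_{v_{j+2}}$ to be of Type (ii), and propagates; for girth $9$ this requires a separate sub-case (the covering edge could be the spine-end edge $v_{j+3}v_{j+4}$), which the paper eliminates by exhibiting a forbidden $P_6$ after a second edge removal. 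Your device of writing the exact-cover equations for the two adjacent interior spine edges $h_2,h_3$ and subtracting makes the end-effect indicators $L$ and $R$ cancel identically, so girths $9$ and $11$ are handled uniformly and the paper's extra $P_6$ sub-case disappears; your counts of which good pendant edges cover $h_2$ and $h_3$ check out (in particular the far-end edge $w_4w_5$ covers both $h_2$ and $h_3$ when $g=9$ and neither when $g=11$, so it always cancels). Two very minor imprecisions, neither of which affects the argument: the surviving second edges $a_{j-1}b_{j-1}$ and $a_{j+2}b_{j+2}$ of the pendant paths at $v_{j-1}$ and $v_{j+2}$ are not deleted but become isolated $K_2$ components outside $T_j$; and the tree $T_j$ is in fact always reduced, so the hedge about passing to $R(T_j)$ is unnecessary.
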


\begin{proof}
(i) Consider the indices modulo 11.  Assume for a contradiction 
that there is an index $j$ such that $T_{v_j}\neq P_1$.
 
 We note that there is no index $i$ such that $T_{v_i}$ and
 $T_{v_{i+1}}$ are of Type (ii), because this would induce the forbidden structure in Lemma \ref{lem:P5} (ii). It follows that for each $i$, at least one
 of $T_{v_i}$ and $T_{v_{i+1}}$ is of Type (i).
 Now, $T_{v_j}$ is of Type (ii), $T_{v_{j+1}}$
 is of Type (i). By removing $v_{j-3}v_{j-2}$ and the edges covered by it,
 we obtain a well-indumatched forest. If $T$ is the component of this forest
 which contains $v_{j+1}v_{j+2}$, then $R(T)$ has at least 5 vertices 
 $v_j, \ldots ,v_{j+4}$ and by Theorem \ref{thm:tree}, 
 every edge of $R(T)$, in particular edge $v_{j+1}v_{j+2}$ should be 
 covered by exactly one good pendant edge of $R(T)$. 
 This implies that $T_{v_{j+2}}$ is of Type (ii),
 and therefore $T_{v_{j+3}}$ is of Type (i). Continuing in this way, we conclude that
 $T_{v_{j+5}},T_{v_{j+7}},T_{v_{j+9}},T_{v_{j+11}} =T_{v_j}$ is of Type (i), a contradiction. Thus $G=C_{11}$.

(ii) 
Assume that $G$ is a minimal well-indumatched graph of girth $9$. 
Then, by Lemma \ref{lem:types}, $G$ is a reduced unicyclic graph. 
Moreover, if $C=v_1v_2\cdots v_{9}v_1$ is its unique cycle, then for each 
index $i$, $T_{v_i}$ is one of the Types (i) or (ii) shown in Figure
 \ref{fig:tree_types}. By Lemma \ref{lem:indupath}, we know that $C_9$ is 
 not well-indumatched. So, assume that there is an index $j$ such 
 that $T_{v_j}\neq P_1$. 
 Now, take the indices modulo $9$. 
 As in Part (i), there is no index $i$ such that $T_{v_i}$ and
 $T_{v_{i+1}}$ are of Type (ii), because this would induce the 
 forbidden structure in Lemma \ref{lem:P5} (ii). 
 So, for each $i$, at least one
 of $T_{v_i}$ and $T_{v_{i+1}}$ is of Type (i).
 Since $T_{v_j}$ is of Type (ii), $T_{v_{j+1}}$
 is of Type (i). By removing $v_{j-3}v_{j-2}$ and the edges covered by it,
 we obtain a well-indumatched forest. If $T$ is the component of this forest
 which contains $v_{j+1}v_{j+2}$, then $R(T)$ has at least $5$ vertices 
 $v_j, \ldots ,v_{j+4}$ and by Theorem \ref{thm:tree}, 
 every edge of $R(T)$, in particular edge $v_{j+1}v_{j+2}$ should be 
 covered by exactly one good pendant edge of $R(T)$.
 Since the girth is 9, we note that the unique good 
 pendent edge of $R(T)$ covering $v_{j+1}v_{j+2}$ can also be the edge 
 $v_{j+3}v_{j+4}$, unlike for girth $11$. 
 If $v_{j+3}v_{j+4}$ is the unique good pendant edge of $R(T)$
 which covers $v_{j+1}v_{j+2}$,
 then $T_{v_{j+4}}, T_{v_{j+3}}, T_{v_{j+2}}, T_{v_{j+1}}$  are all of Type (i). 
 Now, remove edge $v_{j-4}v_{j-3}$ and all edges covered by it. 
 This leaves a well-indumatched forest, let $T'$ be its component containing 
 $v_{j+3}$. 
 Let also $x_1x_2v_{j}$ be the path of length $2$ in $T_{v_j}$. 
 Then $x_1,x_2,v_{j},v_{j+1},v_{j+2}, v_{j+3}$ induce a $P_6$ which is 
 forbidden for being well-indumatched by Lemma \ref{lem:P5} (ii), a contradiction. 
 Therefore, edge $v_{j+1}v_{j+2}$ is covered by a good pendent edge in 
 $T_{v_{j+2}}$ which is of Type (ii), and we obtain a contradiction as for 
 girth $11$. We conclude that there is no well-indumatched graph of girth 9. 
 \end{proof}

\begin{lem}\label{lem:unicyclic_Tv}
 Let $G\neq C_{11}$ be a reduced well-indumatched unicyclic graph of girth $g \geq 10$
 with the unique cycle $C=v_1v_2\cdots v_gv_1$,
 such that for each $i$, $T_{v_i}$ is one of the two rooted trees shown
 in Figure \ref{fig:tree_types}. Then, for each $i$ (modulo 9), the trees $T_{v_i}$
 and $T_{v_{i+1}}$ are alternately of Type (i) and (ii). In particular, $g$ is even and the size of any maximal induced matching is $\frac{g}{2}$.
\end{lem}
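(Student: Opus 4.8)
The plan is to combine the two forbidden-structure facts already available with the propagation technique from Lemma~\ref{lem:girth9_11}, and then to exhibit one explicit maximal induced matching of size $\frac{g}{2}$. Throughout I would read the indices modulo $g$, and I recall from Figure~\ref{fig:tree_types} that Type (i) means $T_{v_i}=P_1$ (a bare cycle vertex), while Type (ii) means $T_{v_i}$ is a path $v_ix_iy_i$ of length $2$, so that $x_iy_i$ is a good pendant edge.

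First I would record two preliminary facts. (a) Not every $T_{v_i}$ is of Type (i): if all were, then $G=C_g$, but $C_g$ is not well-indumatched for $g\geq 10$ with $g\neq 11$ by Lemma~\ref{lem:indupath}(ii), and for $g=11$ we would get $G=C_{11}$, which is excluded by hypothesis. Hence some $T_{v_j}$ is of Type (ii). (b) No two consecutive trees are both of Type (ii): if $T_{v_i}$ and $T_{v_{i+1}}$ were, then $y_ix_iv_iv_{i+1}x_{i+1}y_{i+1}$ is a path whose end-vertices have degree $1$ and whose second and fifth vertices have degree $2$, which is forbidden by Lemma~\ref{lem:P5}(ii). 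In particular every Type (ii) tree is flanked by Type (i) trees.

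The heart of the argument is a propagation step: \emph{if $T_{v_j}$ is of Type (ii), then so is $T_{v_{j+2}}$}. To prove it I would delete $v_{j-3}v_{j-2}$ together with all edges it covers; by Lemma~\ref{lem:removal} the result is a well-indumatched forest, and exactly as in Lemma~\ref{lem:girth9_11} its component $T$ containing $v_j$ is the tree obtained from the arc $v_jv_{j+1}\cdots v_{j-5}$ by re-attaching the surviving subtrees. Passing to $R(T)$, Theorem~\ref{thm:tree} guarantees that every edge of $R(T)$, in particular $v_{j+1}v_{j+2}$, is covered by exactly one good pendant edge. I would then enumerate the good pendant edges lying within distance $1$ of $v_{j+1}v_{j+2}$: since $v_{j+1}$ is of Type (i) by (b) and the pendant edge $x_jy_j$ sits at distance $2$ from $v_{j+1}v_{j+2}$, the only possible cover is a good pendant edge $x_{j+2}y_{j+2}$ arising from a Type (ii) tree at $v_{j+2}$. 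The single escape available at girth $9$ --- that $v_{j+3}v_{j+4}$ itself be the covering good pendant edge --- is closed here, because for $g\geq 10$ the vertex $v_{j+4}$ is an interior vertex of the arc (indeed $v_{j+4}\neq v_{j-5}$) and hence is not a leaf of $R(T)$. Thus $T_{v_{j+2}}$ must be of Type (ii). I expect this enumeration to be the main obstacle, since it requires listing every edge incident to $v_j,\dots,v_{j+3}$ and their tree-children and checking which of them can be a good pendant edge within distance $1$.

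Iterating the propagation step (each time shifting the deleted edge by $2$) shows that $T_{v_{j+2k}}$ is of Type (ii) for every $k$, while by (b) the intermediate trees are of Type (i). If $g$ were odd the indices $j+2k$ would run over all of $\mathbb{Z}_g$, forcing every $T_{v_i}$ to be of Type (ii) and contradicting (b); hence $g$ is even and the two types alternate around $C$. Finally, to read off the matching size I would check that the $\frac{g}{2}$ good pendant edges $\{x_iy_i : T_{v_i}\ \text{of Type (ii)}\}$ form an induced matching --- the Type (ii) trees occur at cycle-distance at least $2$, so these pendant edges are pairwise at distance at least $4$ --- and that they cover every edge of $G$, since each cycle edge $v_iv_{i+1}$ has a Type (ii) endpoint whose pendant edge covers it, and each edge of a Type (ii) tree is covered by that tree's own pendant edge. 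Hence this is a maximal induced matching of size $\frac{g}{2}$, and because $G$ is well-indumatched every maximal induced matching has size $\frac{g}{2}$.
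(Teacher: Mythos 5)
Your proof is correct, and it runs on the same engine as the paper's: delete a cycle edge far from the edge under scrutiny, invoke Lemma~\ref{lem:removal} to obtain a well-indumatched forest, and apply Theorem~\ref{thm:tree} to the surviving component to force a unique good-pendant-edge cover. The difference is in how the case analysis is organized. The paper argues locally at an \emph{arbitrary} cycle edge $v_iv_{i+1}$ and shows directly that exactly one of $T_{v_i},T_{v_{i+1}}$ is of Type (ii); for $g=10$ this requires ruling out an extra candidate cover, namely the cycle edge $v_{i-1}v_i$ sitting at the end of the surviving arc, which the paper handles with a second edge deletion and a forbidden $P_6$ via Lemma~\ref{lem:P5}(ii). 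You instead first secure one Type (ii) tree $T_{v_j}$ (using $G\neq C_{11}$ together with Lemma~\ref{lem:indupath}(ii)) and propagate in steps of two; because the deleted edge $v_{j-3}v_{j-2}$ places the anchored Type (ii) vertex $v_j$ at the extremity of the arc, $v_j$ has degree $2$ there and no cycle edge can be a good pendant edge covering $v_{j+1}v_{j+2}$, so the $g=10$ sub-case evaporates. The price is the preliminary step (a) and the closing parity argument (for odd $g$ the orbit $j+2k$ exhausts $\mathbb{Z}_g$, contradicting (b)); the payoff is a cleaner enumeration of candidate covers. Two small points you should make explicit: that $R(T)$ is reduced with at least $5$ vertices (the arc has $g-4\geq 6$ vertices, and reduction of a tree can only merge leaves, which does not happen here), and that the "(modulo 9)" in the statement should be read as modulo $g$. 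The final computation of the matching size coincides with the paper's.
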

\begin{proof}
Let $i$ be an integer, $1 \leq i \leq g$.
 By removing the edge $v_{i+6}v_{i+7}$ and
 the edges covered by $v_{i+6}v_{i+7}$, we obtain a well-indumatched forest.
 Let $T$ be the component of that forest containing $v_iv_{i+1}$. Since $T$ 
 contains a path of order $6$ induced by 
 $v_{i-1},\ldots, v_{i+4}$, $R(T)$ is a reduced well-indumatched tree with at 
 least $5$ vertices. Then, by
 Theorem \ref{thm:tree} every edge of $R(T)$ is covered by exactly
 one of the good pendant edges of $R(T)$. In particular, the edge $v_iv_{i+1}$ 
 has to be covered by exactly one good pendent edge. If $g\geq 12$, it follows 
 that exactly one of $T_{v_i}$ and $T_{v_{i+1}}$ should be of Type (ii) and the 
 other of Type (i). This yields that  for each $i$ modulo 9, the trees $T_{v_i}$
 and $T_{v_{i+1}}$ are alternately of Type (i) and (ii) and consequently $g$ is even. 
 By Lemma \ref{lem:girth9_11} (i), $C_{11}$ is the only graph of girth 11 with 
 the desired properties, thus $g\neq 11$. If $g=10$ and $v_{i-1}v_i$ is a good 
 pendent edge of $R(T)$ covering $v_iv_{i+1}$ (thus also $v_{i+1}v_{i+2}$), 
 then $T_{v_{i-1}}, T_{v_{i}}, T_{v_{i+1}}, T_{v_{i+2}}$ are of Type (i) and 
 $v_{i+3}v_{i+4}$ is not a good pendent edge of $R(T)$. Since $v_{i+2}v_{i+3}$ 
 should be covered by one good pendent edge, $T_{v_{i+3}}$ is of Type (ii) and 
 $T_{v_{i+4}}$ is of Type (i). Now, removing the edge $v_{i+7}v_{i+8}$ and
 the edges covered by $v_{i+7}v_{i+8}$, we obtain a well-indumatched forest. 
 However, its component containing $v_{i+3}$ has a $P_6$ induced by 
 $v_{i}, v_{i+1}, v_{i+2}$ and three vertices of $T_{v_{i+3}}$ which is 
 forbidden for being well-indumatched by Lemma \ref{lem:P5} (ii), a contradiction. 
 It follows that, in order to cover the edge $v_iv_{i+1}$ by exactly one good 
 pendent edge of $R(T)$, exactly one of $T_{v_i}$ and $T_{v_{i+1}}$ should be of 
 Type (ii), and the other of Type (i). Then, we conclude as in the case $g\geq 12$.

We complete the proof by noting that the set of good pendent edges of $G$ forms an induced matching of size $\frac{g}{2}$. Moreover, it covers all edges, thus it is maximal. Since $G$ is well-indumatched, it follows that all maximal induced matchings have size 
$\frac{g}{2}$.
\end{proof}

\begin{cor}
The only well-indumatched unicyclic graph of girth $11$ is $C_{11}$.
\end{cor}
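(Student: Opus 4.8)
The plan is to prove both directions, the easy one being essentially free: by Lemma \ref{lem:indupath}(ii) the cycle $C_{11}$ is well-indumatched, so it only remains to show that \emph{no other} unicyclic graph of girth $11$ is well-indumatched. So let $G$ be a well-indumatched unicyclic graph of girth $11$, and write its unique cycle as $C=v_1\cdots v_{11}$. First I would pass to the reduction: by Remark \ref{rem:reduced}, $R(G)$ is well-indumatched, and since the girth is $11\ge 5$, any two vertices with the same neighbourhood must be leaves sharing their unique neighbour (two common neighbours would create a $4$-cycle, contradicting the girth). Hence reduction only deletes duplicate leaves, so $R(G)$ is again unicyclic of girth $11$ with the same cycle $C$; moreover $R(G)=C_{11}$ forces $G=C_{11}$, because $C_{11}$ has no leaves and therefore admits no duplicated vertices. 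Thus I may assume $G$ is reduced and aim to show $G=C_{11}$.

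The heart of the argument is to prove that every $T_{v_i}$ is of Type $(i)$ or Type $(ii)$ of Figure \ref{fig:tree_types}; once this is in hand, Lemma \ref{lem:girth9_11}(i) immediately yields $G=C_{11}$. To establish the tree types I would mimic the mechanism of Lemma \ref{lem:types}, but replace its appeals to minimality, which are unavailable here, by contradictions derived directly inside $G$. Concretely, for a fixed $i$ I would delete the far edge $v_{i+5}v_{i+6}$ together with all edges it covers; by Lemma \ref{lem:removal} the result is a well-indumatched forest, and its component $T$ containing $v_i$ still contains the cycle segment $v_{i-1}v_i\cdots v_{i+4}$, so $R(T)$ is a reduced well-indumatched tree on at least $5$ vertices. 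Theorem \ref{thm:tree} then forces every edge of $R(T)$ to be covered by \emph{exactly one} good pendant edge. A branch of $T_{v_i}$ of depth at least $3$, a branch equal to $P_2$, or a ``bushy'' depth-$2$ branch (in which $v_i$ has a second non-pendant child) would each either violate this unique-cover condition in $R(T)$ or exhibit one of the forbidden induced paths of Lemma \ref{lem:P5}, giving a contradiction. This pins each $T_{v_i}$ down to Type $(i)$ or Type $(ii)$.

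The main obstacle is exactly this middle step. In Lemma \ref{lem:types} the restriction to Types $(i)$ and $(ii)$ came for free from minimality, whereas here I must produce, for each deviating shape of $T_{v_i}$, a concrete witness of failure: either a violation of the exactly-once covering in the extracted tree $R(T)$, or two maximal induced matchings of $G$ of different sizes via the exchange idea underlying Lemma \ref{lem:P5} (the telltale example being $C_{11}$ with a pendant $P_3$ attached, which admits maximal induced matchings of sizes $3$ and $4$). Once the tree types are secured, the conclusion is immediate from Lemma \ref{lem:girth9_11}(i); as a sanity check one may also note that, were all $T_{v_i}$ of Type $(i)$ or $(ii)$ with $G\neq C_{11}$, Lemma \ref{lem:unicyclic_Tv} would force the maximal induced matchings to have size $\tfrac{11}{2}$, which is absurd, reconfirming $G=C_{11}$.
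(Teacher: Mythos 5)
Your overall plan is the right one, and you have correctly put your finger on the real issue: the paper states this corollary without an explicit proof, and both Lemma \ref{lem:girth9_11}(i) and Lemma \ref{lem:unicyclic_Tv} presuppose that every $T_{v_i}$ is of Type (i) or (ii), a hypothesis that Lemma \ref{lem:types} establishes only for \emph{minimal} well-indumatched graphs. Minimality is genuinely unavailable here: any unicyclic well-indumatched graph of girth $11$ other than $C_{11}$ contains $C_{11}$ itself as a proper well-indumatched subgraph of girth $11$, so it is automatically non-minimal and Theorem \ref{thm:minimalWIM} says nothing about it. Your reduction step and the final appeal to Lemma \ref{lem:girth9_11}(i) are fine. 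But the middle step --- ruling out every shape of $T_{v_i}$ other than Types (i) and (ii) --- is exactly where the whole content of the corollary lies, and you explicitly defer it as ``the main obstacle'' without carrying out the case analysis. As written, the proposal is therefore not a proof.

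Worse, the single mechanism you propose for that step provably fails in at least one case. Take $G$ to be $C_{11}$ with a pendant path $v_ix_1x_2x_3$ attached at $v_i$ (so $T_{v_i}$ has depth $3$ and all other $T_{v_j}$ are trivial). Deleting $v_{i+5}v_{i+6}$ together with all edges it covers leaves, as the component containing $v_i$, the spider $S_{3,2}$: three legs of length $3$ meeting at $v_i$. This tree satisfies the exactly-once covering condition of Theorem \ref{thm:tree} (indeed the paper proves $S_{r,2}$ is well-indumatched) and contains none of the forbidden paths of Lemma \ref{lem:P5}, so your test raises no objection, even though $G$ is not well-indumatched. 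To expose this configuration one needs a different witness, e.g.\ the induced matching $\{x_1x_2\}$, whose covered edges once removed leave a $P_{10}$, not well-indumatched by Lemma \ref{lem:indupath}(i); or one must delete $C(v_{i+3}v_{i+4})$ instead, so that the pendant path hangs near an end of the surviving cycle segment and the good pendant edges of the resulting tree fail to cover everything. So the required argument cannot be dispatched by one fixed edge deletion: the choice of deleted edge (or of the auxiliary induced matching) must depend on the offending shape of $T_{v_i}$, and each deviating shape --- a branch of depth at least $3$, a $P_2$ branch, a vertex $v_i$ with two non-leaf children, and the interactions with neighbouring $T_{v_j}$'s --- needs its own explicit contradiction. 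Until those cases are written out, the gap remains.
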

We are now ready to characterize all minimal well-indumatched graphs of girth at least 9.

\begin{thm}\label{thm:minimalWIM}
The graph $G$ is a minimal well-indumatched graph of girth at least 9 if and only if either $G=C_{11}$ or $G$ is a reduced  unicyclic graph of even girth $g \geq 10$ with the unique cycle $C=v_1v_2\cdots v_gv_1$,
 such that for each $i$, $T_{v_i}$ is alternately of Type (i) and Type (ii).
\end{thm}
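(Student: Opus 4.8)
The plan is to prove the equivalence in two directions, using the three preceding lemmas for the forward implication and Lemmas \ref{lem:unique_cover} and \ref{lem:pendants_maximal} together with Theorem \ref{thm:tree} for the converse. For the forward direction, suppose $G$ is a minimal well-indumatched graph of girth $g \geq 9$. By Lemma \ref{lem:types}, $G$ is reduced and unicyclic with unique cycle $C = v_1 \cdots v_g$, and each $T_{v_i}$ is of Type (i) or (ii). Lemma \ref{lem:girth9_11}(ii) rules out $g = 9$. If $g = 11$, then $G$ satisfies the hypotheses of Lemma \ref{lem:girth9_11}(i), so $G = C_{11}$. In the remaining cases $g = 10$ or $g \geq 12$ we have $G \neq C_{11}$, so Lemma \ref{lem:unicyclic_Tv} applies and shows that $g$ is even and that the trees $T_{v_i}$ alternate between Type (i) and Type (ii). This is precisely the stated dichotomy.

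For the converse I would handle the two families separately. The cycle $C_{11}$ has girth $11 \geq 9$ and is well-indumatched by Lemma \ref{lem:indupath}(ii); it is minimal since its only cycle is $C_{11}$ itself, so every proper subgraph is acyclic and therefore is not a graph of girth at least $9$. For the alternating family, $G$ is reduced unicyclic of girth $g \geq 10 \geq 9$, and I would prove it well-indumatched by applying Lemma \ref{lem:unique_cover} to the set $M$ of good pendant edges, i.e. the edges $a_ib_i$ at the Type (ii) vertices $v_i$ with path $T_{v_i} = v_i a_i b_i$ (a Type (i) vertex being an unchanged degree-$2$ vertex). A short local computation shows that each $a_ib_i$ covers exactly $\{a_ib_i, v_ia_i, v_{i-1}v_i, v_iv_{i+1}\}$, that these four edges pairwise cover one another (hypothesis (ii) of Lemma \ref{lem:unique_cover}), and --- since by alternation exactly one endpoint of every cycle edge is a Type (ii) vertex --- that every edge of $G$ is covered by exactly one member of $M$ (hypothesis (i)). Lemma \ref{lem:unique_cover} then yields that $G$ is well-indumatched.

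The main difficulty is establishing minimality of the alternating graphs, and this is where I would concentrate the effort. Let $H \subsetneq G$ be a proper subgraph of girth at least $9$. Since $C$ is the unique cycle of $G$, $H$ must contain $C$; its component $G'$ containing $C$ is a reduced unicyclic graph of girth $g$ in which each $T_{v_i}$ has been replaced by a subtree (a Type (ii) tree $v_i a_i b_i$ may shrink to $P_2$ or to $P_1$), while every other component of $H$ is an isolated vertex or a single edge and hence well-indumatched. Thus $H$ is well-indumatched if and only if $G'$ is. Because $H$ is proper and contains $C$, at least one Type (ii) tree, say $T_{v_i}$, is strictly shrunk. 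If it shrinks to $P_1$, the cycle edge $v_{i-1}v_i$ joins two degree-$2$ vertices; if it shrinks to $P_2$, the pendant edge $v_i a_i$ hangs at the now degree-$3$ vertex $v_i$. In either case the edge in question can be covered only by a good pendant edge rooted at one of $v_{i-1}, v_i, v_{i+1}$, none of which carries a good pendant edge in $G'$, so this edge is covered by no good pendant edge of $G'$ --- a conclusion insensitive to how the other trees were shrunk.

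Finally, to turn the uncovered edge into a contradiction, I would suppose $G'$ well-indumatched and delete a cycle edge far from $v_i$ together with all edges it covers; by Lemma \ref{lem:removal} this leaves a well-indumatched forest whose component $T$ containing $v_i$ is a tree of order at least $5$ that still contains the uncovered edge, with the good pendant edges near $v_i$ unchanged. Lemma \ref{lem:pendants_maximal} then forces the good pendant edges of $T$ to cover every edge of $T$, contradicting the existence of the uncovered edge. Hence $G'$, and with it $H$, fails to be well-indumatched, so $G$ has no proper well-indumatched subgraph of girth at least $9$ and is therefore minimal. The delicate points to get right are the reduction of an arbitrary proper subgraph to the shrunk-tree graph $G'$ and the robustness of the ``uncovered edge'' claim under simultaneous modifications of several trees; the rest is bookkeeping with the covering relation.
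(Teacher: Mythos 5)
Your forward direction and the well-indumatchedness half of the converse coincide with the paper's proof (Lemmas \ref{lem:types}, \ref{lem:girth9_11}, \ref{lem:unicyclic_Tv}, then Lemma \ref{lem:unique_cover}), and your reduction of an arbitrary proper subgraph $H$ to the component $G'$ containing $C$ is sound. The gap is in the final step of the minimality argument, exactly at the point you flag as delicate but then wave away. When you delete a cycle edge $v_jv_{j+1}$ together with the edges it covers, the resulting tree $T$ acquires \emph{new} good pendant edges at the two broken ends of the cycle (a root that loses its cycle edge and has no surviving subtree becomes a leaf whose neighbour may have degree $2$), and these are not good pendant edges of $G'$. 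For $g=10$ the surviving cycle path has only six vertices, so no choice of $j$ keeps both ends at distance at least $3$ from $v_i$, and your designated ``uncovered'' edge can in fact be covered. Concretely, take $g=10$ with $T_{v_1},T_{v_3},T_{v_5},T_{v_7},T_{v_9}$ of Type (ii) in $G$, and let $G'$ shrink $T_{v_1}$ and $T_{v_9}$ to $P_1$ while keeping $T_{v_3},T_{v_5},T_{v_7}$ intact. Then $G'$ is not well-indumatched (it has maximal induced matchings $\{a_3b_3,a_5b_5,a_7b_7,v_9v_{10}\}$ and $\{a_3b_3,a_5b_5,a_7b_7,v_8v_9,v_1v_2\}$ of sizes $4$ and $5$), yet if you remove the antipodal edge $v_6v_7$ and its covered edges, the component containing $v_1$ is the path $v_9v_{10}v_1v_2v_3v_4$ with $a_3b_3$ hanging at $v_3$; its good pendant edges are $v_9v_{10}$ and $a_3b_3$, every edge of this tree is covered exactly once, and by Theorem \ref{thm:tree} it \emph{is} well-indumatched, so Lemma \ref{lem:removal} yields no contradiction. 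In particular your designated edge $v_{10}v_1$ is covered by the new good pendant edge $v_9v_{10}$.

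So the single-removal scheme is not ``insensitive to how the other trees were shrunk'': the cycle edge you delete must be chosen as a function of the shrinking pattern (in the example above, deleting $v_5v_6$ instead of $v_6v_7$ does leave $v_1v_2$ uncovered), or constraints from two different removals must be combined. This is precisely how the paper avoids the issue: it disposes of a root shrunk to $P_2$ by citing the argument inside the proof of Lemma \ref{lem:types} (which for $g=10$ removes $v_{i+5}v_{i+6}$ and then $v_{i-5}v_{i-6}$ and plays the two resulting constraints against each other), and of a root shrunk to $P_1$ (two consecutive Type (i) roots) by the contrapositive of Lemma \ref{lem:unicyclic_Tv}, whose proof likewise gives $g=10$ a separate treatment ending in a forbidden $P_6$. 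Your argument can be made to work for $g\geq 12$ once the removed edge and the designated edge are pinned down relative to each other, but for $g=10$ it must be repaired along these lines or replaced by appeals to those two lemmas.
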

\begin{proof}
Let $G$ be a minimal well-indumatched graph of girth $g\geq 9$. By Lemma \ref{lem:types}, $G$ is a reduced unicyclic graph. Moreover, if $C=v_1v_2\cdots v_gv_1$ is its unique cycle, then every $T_{v_i}$ is of one of the Types (i) or (ii).  By Lemma \ref{lem:girth9_11} (ii), there is no well-indumatched graph of girth $9$. So, we have $g(G)\geq 10$. Then, either $G=C_{11}$ by Lemma \ref{lem:girth9_11} (i) or the length of the unique cycle is even and for each $i$, $T_{v_i}$ is alternately of Type (i) and Type (ii) by Lemma \ref{lem:unicyclic_Tv}.

Let now $G$ be a reduced  unicyclic graph of even girth $g$ at least 10 with the 
unique cycle $C=v_1v_2\cdots v_gv_1$, such that for each $i$, $T_{v_i}$ is 
alternately of Type (i) and Type (ii). Let also $M$ be a matching of $G$ 
containing all good pendent edges. Then, each edge of $G$ is covered by exactly 
one edge of $M$. Moreover, if two edges of $G$ are covered by the same edge of $M$, 
then these two edges cover each other. It follows from 
Lemma \ref{lem:unique_cover} that $G$ is well-indumatched. 
Let us now show that $G$ is also a minimal well-indumatched graph of girth $g$. 
Indeed, any subgraph $G'$ of $G$ of the same girth $g$ is obtained by removing 
(at least one) edge from $E(G)\setminus E(C)$. 
If $R(G')$ has $T_{v_i}=P_2$ for some $i$, then by Lemma \ref{lem:types}, 
$G'$ is not well-indumatched. 
So assume every $T_{v_i}$ in $R(G')$ is of  one of the Types (i) or (ii) with 
both $v_i$ and $v_{i+1}$ of degree 2 (in $R(G')$), for some $i$. 
Then, $G'$ is not well-indumatched by Lemma \ref{lem:unicyclic_Tv}.
\end{proof}

It is not hard to see that given a graph $G$, one can check in time $O(m)$ whether $G$ is reduced and unicyclic; and if it is unicyclic, whether for each $i$, $T_{v_i}$ is alternately of Type (i) and (ii) (and therefore whether it has even girth at least 10). It follows from Theorem \ref{thm:minimalWIM} that one can decide in time $O(m)$ whether a given graph is minimal well-indumatched of girth at least 9. However, unfortunately, this result does not imply a polynomial time algorithm to recognize well-indumatched graphs of girth at least 9. Indeed, this is due to the fact that the property of being well-indumatched is not hereditary, as we already noted in Section \ref{sec:prem}. 

Theorem \ref{thm:minimalWIM} also implies the following, which is of interest by its own.

\begin{cor}\label{cor:unicyclic_even}
 For an odd integer $g\geq 9$ and $g\neq 11$, there is no well-indumatched graph of girth $g$.
\end{cor}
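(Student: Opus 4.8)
The plan is to argue by contradiction and to reduce any hypothetical well-indumatched graph of girth $g$ to a \emph{minimal} one of the \emph{same} girth, so that the complete description in Theorem \ref{thm:minimalWIM} can be brought to bear. Suppose, then, that $G$ is a well-indumatched graph whose girth $g$ is odd, with $g\geq 9$ and $g\neq 11$, and fix a shortest cycle $C$ of $G$, so that $C$ has length $g$. The goal is to peel off the superfluous edges of $G$ without disturbing $C$, producing a minimal well-indumatched graph of girth exactly $g$.

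To carry this out I would repeatedly invoke Lemma \ref{lem:removal}: as long as there is an edge $e$ that is not covered by any edge of $C$, I delete $e$ together with all edges it covers. Lemma \ref{lem:removal} guarantees that well-indumatchedness is preserved at each step, and---this is the point on which everything hinges---the cycle $C$ survives untouched. Indeed, if some edge $f$ of $C$ were covered by $e$, then $e$ would be at distance at most $1$ from $f$, hence $e$ would itself be covered by the cycle edge $f$, contrary to the choice of $e$. Thus no edge of $C$ is ever removed, the girth stays equal to $g$ throughout, and the process terminates in a well-indumatched graph in which every edge is covered by $C$. Passing to its reduction (Remark \ref{rem:reduced}) and discarding, in the same girth-preserving fashion, the edges lying on paths of length three hanging off $C$, I reach a graph $G^{*}$ that is minimal among well-indumatched graphs of girth $g$.

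Finally I would feed $G^{*}$ into the structural machinery. Exactly as in the proof of Lemma \ref{lem:types}, $G^{*}$ is a reduced unicyclic graph whose unique cycle has length $g$ and each of whose pendant trees $T_{v_i}$ is of Type (i) or (ii); the reductions used there are all girth-preserving, so that argument applies verbatim to a graph that is only minimal within its own girth class. Now Lemma \ref{lem:girth9_11}(ii) excludes $g=9$, so in fact $g\geq 13$ and in particular $G^{*}\neq C_{11}$; Lemma \ref{lem:unicyclic_Tv} then forces the trees $T_{v_i}$ to alternate between the two types and hence forces $g$ to be even. Since our $g$ is odd, this is the desired contradiction, and the corollary follows.

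The main obstacle is precisely the failure of heredity for well-indumatched graphs, already flagged in Section \ref{sec:prem}: one cannot simply pass to an arbitrary minimal well-indumatched subgraph of girth at least $9$, because deleting edges can \emph{raise} the girth, and Theorem \ref{thm:minimalWIM} would then describe a graph of the wrong girth. The entire difficulty is therefore the bookkeeping needed to keep some shortest cycle intact so that the girth remains exactly $g$; the elementary observation that an edge covered by an edge $e$ (with $e$ itself not covered by $C$) cannot lie on $C$ is exactly what makes this girth-preserving reduction possible.
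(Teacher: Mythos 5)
Your argument is correct and follows essentially the same route as the paper, which obtains this corollary directly from the chain Lemma~\ref{lem:types} $\to$ Lemma~\ref{lem:girth9_11}(ii) $\to$ Lemma~\ref{lem:unicyclic_Tv} applied to a minimal well-indumatched graph of girth exactly $g$; your explicit observation that every reduction used there preserves the girth (so that minimality within the girth-$g$ class suffices, despite the failure of heredity) is precisely the point the paper leaves implicit in its one-line derivation from Theorem~\ref{thm:minimalWIM}. The only small simplification available is that you need not verify that your hand-built $G^{*}$ is minimal: just take a well-indumatched graph of girth exactly $g$ with the fewest edges, which is minimal by definition, and apply the lemmas to it.
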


Unlike this negative result, we show in what follows that there are infinitely many well-indumatched unicyclic graphs of even girth or small odd girth, that is 3, 5 and 7.   

Let $r \geq 1$ and $k \geq 0$ be integers and
$S_{r,k}$ be a tree obtained by subdividing each edge of $K_{1,r}$
 by $k$ vertices. 
Consider the disjoint union of $C_3$ and
$S_{r,2}$. Join the vertex of degree $r$ in $S_{r,2}$
to a vertex of $C_3$ and add a new vertex and join it
to another vertex of $C_3$. Denote this graph by $G_r$ and
note that the order of this graph $3r+5$.
Consider the disjoint copy of $C_5$ and
$S_{r,2}$ and identify the vertex of degree $r$ in $S_{r,2}$
with a vertex of $C_5$ and  denote the resulting graph by $H_r$.
The order of this graph is $3r+5$.
Also, identify a vertex of $C_7$
and the vertex of degree $r$ in $S_{r,3}$ and add a new vertex
and join it to a neighbor of the identified vertex in $C_7$.
Denote this graph by $L_r$. The order of $L_r$ is 
$4r+8$.
Finally, for any even integer $k\geq 4$, consider the disjoint copy of 
$C_k$  with the vertex set $\{v_1,\ldots,v_k\}$ and
$S_{r,2}$ and identify the vertex of degree $r$ in $S_{r,2}$
with $v_1$. Also, add $\frac{k}{2}$ copies of
$P_2$ and join them to $v_2,v_4,\ldots,v_{k}$. We denote the resulting
graph by $Q_{k,r}$ (see Figure \ref{fig:girth3578}).

\begin{figure}
    \centering
    \includegraphics[width=14cm]{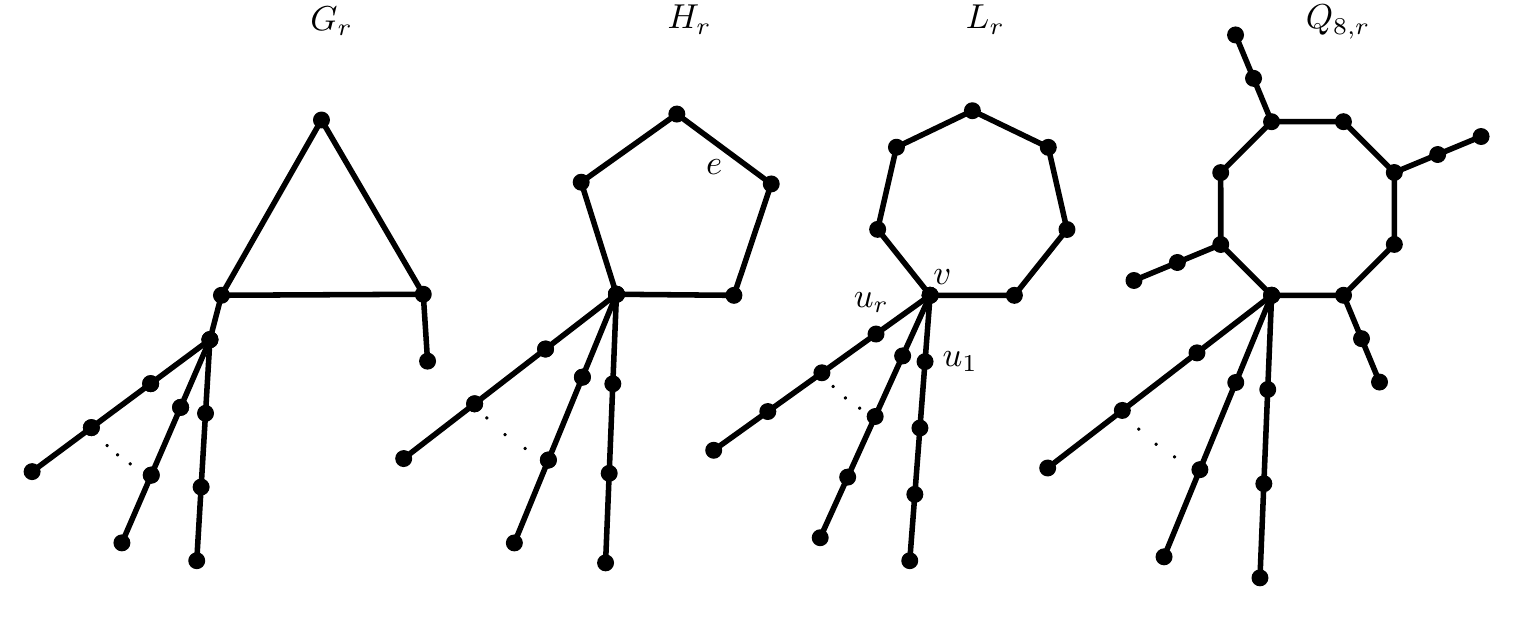}
    \caption{Well-indumatched unicycle graphs of girths $3,5,7$ and $8$}
    \label{fig:girth3578}
\end{figure}

\begin{thm}
 Let $r$ be a positive integer. Then the following statements
 hold:
 \begin{itemize}
 \item[(i)] $S_{r,2}$ is a well-indumatched tree.

 \item[(ii)] $G_r$ is a well-indumatched unicyclic graph
 of girth $3$.
 
 \item[(iii)] $H_r$ is a well-indumatched unicyclic graph
 of girth $5$.
  
 \item[(iv)] $L_r$ is a well-indumatched unicyclic graph
 of girth $7$.
 
 \item[(v)] For every even integer $k\geq 4$, $Q_{k,r}$
 is a well-indumatched unicyclic graph of girth $k$.
\end{itemize}
Thus, there are infinitely many well-indumatched trees 
and infinitely many well-indumatched unicyclic graphs of girth $k$,
where $k\in\{3,5,7\}$ or $k$ is an even integer greater than $2$.

\end{thm}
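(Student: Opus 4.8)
The plan is to exhibit, for each of the five families, an explicit matching to which the covering criterion of Lemma \ref{lem:unique_cover} applies, the single genuine exception being $L_r$, which I expect to require a direct argument. In every case the candidate matching $M$ will consist of the good pendant edges (the outermost edge of each subdivided arm, and later of each attached pendant path) together with at most one extra edge whose job is to absorb the cycle. To invoke Lemma \ref{lem:unique_cover} one checks (i) that every edge of the graph is covered by exactly one member of $M$, and (ii) that the edges covered by a common member of $M$ pairwise cover one another. For part (i), $S_{r,2}$ is a tree, so I would instead verify the hypothesis of Theorem \ref{thm:tree}: its good pendant edges are the $r$ leaf edges, and the set of edges covered by a leaf edge is precisely the three-edge path forming that arm, so each edge is covered once; the degenerate case $r=1$ is $P_4$, handled directly by Lemma \ref{lem:indupath}(i). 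The infinitely many distinct orders $3r+1$ then give infinitely many well-indumatched trees.

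For parts (ii), (iii) and (v) the graphs are unicyclic, and I would keep the arm good pendant edges (which cover each arm exactly once, as above) and only tune the edge responsible for the cycle. For $G_r$ this edge is the pendant edge $uw_2$ hung on the triangle: since $C_3$ is complete, $uw_2$ covers itself, the three triangle edges and the edge joining the triangle to the centre, and these pairwise cover, so (i)--(ii) hold. For $H_r$, where the centre is identified with a vertex of $C_5$, I would put into $M$ a single cycle edge farthest from the centre; because in $C_5$ any two edges are at distance at most $1$, this one edge covers all five cycle edges and they pairwise cover, while the arms are untouched. For $Q_{k,r}$ the decisive feature is that the $k/2$ attached paths have length $2$: taking their outer edges $p_jq_j$ into $M$, each covers only the two cycle edges incident with the even vertex $v_{2j}$ (plus its own path), and those two share $v_{2j}$; since the pendant paths sit at alternate vertices of the even cycle, every cycle edge is covered exactly once. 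In all three cases Lemma \ref{lem:unique_cover} then certifies that the graph is well-indumatched, and the girth is read off from the chosen cycle.

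The hard part is part (iv), and the obstacle is structural: Lemma \ref{lem:unique_cover} cannot be used for $L_r$ at all. Indeed, the vertices $w_4,w_5$ of $C_7$ carry no attachment, so the edge $w_4w_5$ can be covered only by a cycle edge; but any edge of $C_7$ covers five consecutive cycle edges, whose two extreme edges are at distance greater than $1$ and so fail to cover one another, violating condition (ii). I would therefore argue directly that every maximal induced matching of $L_r$ has the same size $r+2$. For the lower bound: each of the $r$ arms must contribute a (distinct) edge covering its leaf edge; the isolated edge $w_4w_5$ forces one further cycle edge; and a short case check shows that, because $u$ sits at $w_2$ while the far cycle edges lie on the opposite arc, no single additional edge covers both $uw_2$ and the rest of the cycle, forcing yet another edge. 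For the upper bound I would split on whether the matching uses an edge incident with the root $w_1$: in that case the single edge at $w_1$ blocks the rest of the closed neighbourhood of $w_1$, which constrains the remaining contributions enough to cap the total at $r+2$; otherwise every arm contributes at most one edge and the cycle-with-pendant at most two, again giving $r+2$. Lemma \ref{lem:P5} is the convenient tool to rule out the configurations that would otherwise spoil these counts.

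Finally, since $r$ ranges over all positive integers and $k$ over all even integers at least $4$, the five families produce infinitely many well-indumatched graphs of each stated girth, which is exactly the concluding assertion. The step I expect to be delicate is the upper bound in part (iv): one must ensure the case analysis on the edge at $w_1$ is exhaustive and that coverage shared between an arm edge and the cycle (for instance $x_1^ix_2^i$ covering both a leaf edge and a cycle edge at $w_1$) does not secretly permit a smaller maximal induced matching.
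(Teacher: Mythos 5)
Your proposal matches the paper's proof: parts (i), (ii), (iii) and (v) are handled exactly as in the paper via Lemma \ref{lem:unique_cover} (the pendant edges, plus one extra cycle edge in the case of $H_r$), and part (iv) is, as you anticipated, the one case the paper also treats by a direct count showing every maximal induced matching of $L_r$ has size $r+2$. The only cosmetic difference is that the paper's case split for (iv) is on whether the matching contains an edge $vu_i$ from the root into an arm (rather than on edges incident with the root generally), which makes the two counts $2+(r-1)+1$ and $r+2$ fall out slightly more cleanly, but the argument is the same.
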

\begin{proof}
 (i, ii, v) Let $G$ be one of the
 graphs $S_{r,2}$, $G_r$ or $Q_{k,r}$.
 The pendant edges of $G$
 form a matching and each edge of $G$ is covered by exactly one of
 these edges. Moreover, any two edges of $G$ which are
  covered by the same pendent edge, cover each other. Therefore, Lemma \ref{lem:unique_cover} implies that $G$ 
 is well-indumatched.
 
 (iii) Let $M$ be the set of pendant edges of $H_r$ 
 together with $e$ (see Figure \ref{fig:girth3578}). 
 $M$ is a matching of $H_r$ and
 each edge of $H_r$ is covered by exactly one edge of $M$. Moreover, any two edges of $H_r$ which are covered by the same edge of $M$, cover each other. Therefore,  by Lemma
 \ref{lem:unique_cover}, $H_r$ is well-indumatched.
 
 (iv)
 For $i=1,\ldots,r$, denote the path of length $4$ in $L_r$ which starts at $v$ 
 and contains $u_i$ by $P_{u_i}$.
 Let $M$ be a maximal induced matching of $L_r$.
 If $M$ contains $vu_i$, for some $i$, then $M$ should contain
 two edges of $P_{u_i}$, one edge of each $P_{u_j}$, for $j \neq i$,
 and one edge of $C_7$. So, $|M|=r+2$. If $M$ does not contain $vu_i$ for any $i$,
 then $M$ should contain one edge of each $P_{u_j}$ and
 two edges of $C_7$. Thus, $|M|=r+2$.
 Therefore, the size of every maximal induced matching of $L_r$ is $r+2$
 and $L_r$ is well-indumatched.
\end{proof}
The following result gives an infinite family of well-indumatched graphs of girth 3, with the additional property of being regular.
\begin{thm}
 For every positive integer $r\geq 3$, there are infinitely many
 well-indumatched $r$-regular graphs of girth $3$.
\end{thm}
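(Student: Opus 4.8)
The plan is to separate the statement into two tasks: first, exhibit a single well-indumatched $r$-regular graph of girth $3$ for each $r\ge 3$; and second, turn that one graph into infinitely many pairwise non-isomorphic ones. For the first task I would take the complete graph $K_{r+1}$. It is $r$-regular since every vertex is adjacent to the remaining $r$ vertices, and it has girth $3$ because $r+1\ge 4$ forces a triangle. The key point is that $K_{r+1}$ is well-indumatched: any two distinct edges lie at distance at most $1$, since two edges either share an endpoint or, being in a complete graph, have adjacent endpoints. Hence every single edge already covers all of $E(K_{r+1})$, so the maximal induced matchings of $K_{r+1}$ are exactly its single edges, all of size $1$.

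For the second task I would use disjoint unions. Recall the fact, already invoked in the proof of Lemma \ref{lem:longest_path}, that a graph is well-indumatched if and only if each of its connected components is well-indumatched; more precisely, a maximal induced matching of a disjoint union restricts to a maximal induced matching on each component, so its size is the sum of the (constant) component sizes. Consequently the disjoint union of $t$ copies of $K_{r+1}$ is again well-indumatched, is still $r$-regular of girth $3$, and has order $t(r+1)$. Letting $t$ range over the positive integers produces infinitely many pairwise non-isomorphic examples, which settles the theorem as stated; on this route there is essentially no obstacle, only the two routine verifications above.

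If one insists on connected witnesses, I would instead chain copies of $K_{r+1}$ into a ``necklace'': in the $i$-th copy delete one edge to free two vertices of degree $r-1$, and reconnect consecutive copies by new edges so that all degrees return to $r$. Since deleting a single edge from $K_{r+1}$ with $r\ge 3$ still leaves a triangle, the girth stays $3$, and the number of copies is arbitrary. The substantive work --- and the main obstacle --- is proving that such a connected construction is well-indumatched. The clean approach would be to produce a matching $M$ witnessing the hypotheses of Lemma \ref{lem:unique_cover}: every edge is covered by exactly one edge of $M$, and edges sharing a common cover pairwise cover one another. The delicate cases are precisely the boundary edges near the connecting edges, where the ``exactly one'' condition is easy to violate, so the crux is getting that bookkeeping right (or, alternatively, arguing directly that every maximal induced matching meets each copy in exactly one edge, so all have size equal to the number of copies). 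Since the disjoint-union construction already proves the theorem, I would present the connected version, if at all, only as a strengthening.
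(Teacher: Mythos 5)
Your disjoint-union argument is correct for the theorem as literally stated: $K_{r+1}$ is indeed $1$-well-indumatched since any two of its edges are at distance at most $1$, a maximal induced matching of a disjoint union restricts to a maximal induced matching of each component, and varying the number $t$ of copies gives infinitely many non-isomorphic $r$-regular graphs of girth $3$. This is a genuinely different and far more elementary route than the paper's, which instead builds \emph{connected} witnesses: it chains $t$ copies of $K_{r+1}$ by deleting two disjoint edges from each copy and adding connecting edges between consecutive copies (essentially your ``necklace''), and then proves by induction on $t$ that every maximal induced matching of the resulting graph $L_t$ (and of two auxiliary variants $G_t$, $H_t$ needed to make the induction go through) has size exactly $t$. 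What each approach buys: yours settles the statement with almost no work but only produces disconnected examples, while the paper's yields connected $r$-regular graphs of girth $3$ --- a distinction the authors evidently care about, since their concluding conjecture is phrased for \emph{connected} well-indumatched graphs. Note also that your proposed route for the connected case via Lemma \ref{lem:unique_cover} is likely to fail as stated: with one edge chosen per copy, each connecting edge between consecutive copies is covered by two edges of $M$, violating the ``exactly one'' condition you correctly identify as the crux; the paper sidesteps this by arguing directly about how a maximal induced matching interacts with the connecting edges and invoking the induction hypothesis on the two pieces that remain after removing the edges they cover. So if connectedness is intended, your main argument does not suffice and the deferred bookkeeping is exactly where the real content of the paper's proof lies.
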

\begin{proof}
 Let $t$ be a positive integer and consider $t$ disjoint copies of complete graphs of order $r+1$,
 say $K_{r+1}^1,K_{r+1}^2,\ldots,K_{r+1}^t$
 and let $x_iy_i,u_iv_i\in E(K_{r+1}^i)$ be two disjoint edges,
 $i=1,\ldots,t$.
 Now, join $u_i$ to $x_{i+1}$ and $v_i$ to $y_{i+1}$, 
 for $i=1,\ldots,t-1$, then remove $x_iy_i$
 and $u_iv_i$, for $i=1,\ldots,t$, and call the resulting graph $G_t$.
 Also, let $H_t = G_t+x_1y_1$ and $L_t = G_t+x_1y_1+u_tv_t$.
 Note that $L_t$ is $r$-regular and has girth $3$.
 By strong induction on $t$, we prove that the size of all maximal induced matchings
 of $G_t$, $H_t$ and $L_t$ is $t$ and therefore, these graphs are 
 well-indumatched. 
 For $t=1$ the assertion is clear. Let the result hold for $G_j,H_j,L_j$,
 $j=1,\ldots,t-1$.
 Let $M$ be a maximal induced matching of $L_t$. If for all $i$, $1\leq i \leq t-1$, none of the $u_ix_{i+1}$ and
 $v_iy_{i+1}$ are in $M$, then it is not hard to see
 that $|M|=t$. 
 If there is
 some $i$, $1\leq i \leq t-1$, such that one of the $u_ix_{i+1}$ or
 $v_iy_{i+1}$ is in $M$, then both of these edges should be in $M$. By removing
 the edges covered by $u_ix_{i+1}, v_iy_{i+1}$, we obtain
 a graph with two components which are isomorphic to $G_{i-1}$ and $G_{t-i-1}$.
 Since the restriction of $M$ to each of these graphs is a maximal 
 induced matching, by the induction hypothesis, $M$ has $i-1$ edges
 in the component isomorphic to $G_{i-1}$ and $t-i-1$ edges
 in the other component. Thus, $|M|=2+(i-1)+(t-i-1) = t$. The proof
 for $H_t$ and $G_t$ is similar.
\end{proof}

We conclude this section by noting that our results settle the existence of well-indumatched graphs with all possible girths except girth 11. In other words, for all girth $g\neq 11$ at least 3, either we establish that there is no well-indumatched graph of girth $g$ or we exhibit an infinite family of well-indumached graphs of girth $g$. The only exception  of this dichotomy is $g=11$ for which we only know that the only minimal well-indumatched graph of girth 11 is $C_{11}$.  We conjecture the following:
\begin{conjecture}
The cycle $C_{11}$ is the only
connected well-indumatched graph of girth 11.
\end{conjecture}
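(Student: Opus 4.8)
The plan is to reduce the conjecture to a single global structural claim and to concentrate all the difficulty there. By Remark \ref{rem:reduced} we may assume $G$ is reduced: $G$ is well-indumatched if and only if $R(G)$ is, and in a graph of girth at least $5$ any two vertices with a common neighbourhood must be leaves attached to a common vertex (two shared neighbours would create a $C_4$), so reduction deletes only pendant twins and therefore preserves both connectedness and girth. Since $G$ has girth $11$, it contains a shortest cycle $C=v_1\cdots v_{11}$, and the whole problem is to show that nothing else is present, i.e. that $G=C$.

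The key reduction is: \emph{it suffices to prove that $G$ is unicyclic}. Indeed, the corollary preceding Theorem \ref{thm:minimalWIM} asserts that the only well-indumatched unicyclic graph of girth $11$ is $C_{11}$, so once unicyclicity is in hand we are done. I would therefore try to establish unicyclicity directly. Assume for contradiction that $G$ carries a second cycle $C'\neq C$; by the girth hypothesis $C'$ has length at least $11$, so $C\cup C'$ contains a long path meeting $C$ in a controlled way. The tool to exploit this is the edge-removal Lemma \ref{lem:removal}: choosing a maximal induced matching $F_0\subseteq E(C)$ (of size $3$, in one of the balanced patterns available on $C_{11}$) and deleting the set $F$ of edges it covers leaves, by that lemma, a well-indumatched graph $G\setminus F$; the aim is to select $F_0$ so that the offending second cycle is broken and lands inside a single tree component. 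Applying Theorem \ref{thm:tree} to that component --- which necessarily contains a path long enough to force a good pendant edge covering two distinct edges --- or exhibiting a pendant $P_6$ with degree-$2$ interior forbidden by Lemma \ref{lem:P5}(ii) (exactly the configuration manufactured in the proofs of Lemmas \ref{lem:girth9_11} and \ref{lem:unicyclic_Tv}) would yield the contradiction. The same removal technique is intended to dispose of edges of $G$ not covered by $C$, turning them after deletion into pendant paths on which Lemma \ref{lem:P5} applies.

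The hard part will be exactly this global step, and it is where the absence of minimality bites. In the paper, the passage from ``well-indumatched of girth $11$'' to ``unicyclic with trees of Type (i)/(ii)'' (Lemma \ref{lem:types}) repeatedly invokes minimality to conclude that every edge is covered by $C$, that the attached trees have depth at most $2$, and that no $T_{v_i}$ equals $P_2$; none of these is available for an arbitrary connected $G$. A general $G$ of girth $11$ may have large cyclomatic number, with several long cycles sharing subpaths with $C$ in theta- or handcuff-like patterns, and a single choice of $F_0\subseteq E(C)$ need not detach the extra cycle cleanly into a tree component. Making the removal argument uniform over all such relative positions --- and over attachments that sit far from $C$ and are not covered by it --- is the genuine obstacle, and is presumably why the statement is left as a conjecture.

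A plausible way to tame the case analysis is to iterate the peeling: repeatedly delete the edges covered by induced matchings supported near $C$, so as to shrink $G$ while keeping it well-indumatched, and to prove that each step either terminates at $C_{11}$ or, at some stage, exposes one of the forbidden pendant paths of Lemma \ref{lem:P5}. The delicate point in this scheme is to guarantee that the girth never drops below $11$ during the peeling (so that the unicyclic corollary and the classification of Theorem \ref{thm:minimalWIM} remain applicable), and to control what happens when a single removal simultaneously touches $C$ and a competing cycle. Carrying this out rigorously for every attachment pattern is the crux I would expect to occupy the bulk of the argument.
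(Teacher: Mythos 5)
There is a genuine gap here, and you have in fact diagnosed it yourself: the statement you were asked to prove is stated in the paper as a \emph{conjecture}, and the paper offers no proof of it. What the paper actually establishes (via Lemma \ref{lem:types}, Lemma \ref{lem:girth9_11}(i) and the corollary you cite) is only that $C_{11}$ is the unique \emph{minimal} well-indumatched graph of girth $11$, equivalently the unique well-indumatched \emph{unicyclic} graph of girth $11$. Your reduction of the conjecture to the single claim ``every connected well-indumatched graph of girth $11$ is unicyclic'' is the right way to frame the problem, and your observations about reduction preserving girth and connectedness at girth $\geq 5$ are correct. But the proposal never closes that claim: the peeling strategy via Lemma \ref{lem:removal} is described only in outline, and you explicitly concede that making it work ``for every attachment pattern'' is the crux you have not carried out.

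The reason the gap is real and not merely expository is exactly the one you name. Every structural conclusion in Lemma \ref{lem:types} --- that each edge of $G$ is covered by an edge of $C$, that the components of $G\setminus E(C)$ are trees of depth at most $2$, that no $T_{v_i}$ is a $P_2$ --- is extracted by contradiction with \emph{minimality}: an offending configuration yields a proper well-indumatched subgraph of the same girth, which is absurd only for a minimal $G$. For an arbitrary connected well-indumatched $G$ of girth $11$ none of these deductions is available; $G$ could a priori carry edges uncovered by $C$, deep attached trees, or several long cycles overlapping $C$, and a single induced matching $F_0\subseteq E(C)$ need not detach these into tree components where Theorem \ref{thm:tree} or Lemma \ref{lem:P5} can be invoked. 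You would need either a new argument showing that some well-chosen sequence of removals always terminates in a recognizably non-well-indumatched configuration while keeping the girth at $11$, or an entirely different global invariant. Until that is supplied, what you have is a plausible research programme, not a proof --- which is consistent with the authors leaving the statement open.
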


\section{Well-indumatched graphs with maximal induced matchings of fixed size}\label{sec:fixed}

Let us call a graph \emph{$k$-well-indumatched} if all of its maximal 
induced matchings have size $k$.  
In this section, we focus on $k$-well-indumatched graphs with fixed $k$. 
Let $K_n$ be a clique on $n$ vertices. We start with a general observation:

\begin{remark}\label{rem:K2}
	If $G$ is a $k$-well-indumatched graph then it is $(k+1)K_2$-free.
\end{remark}
On the other hand, as expected, if $\ell$ is the smallest integer such that $G$ is $\ell K_2$-free, then $G$ is not necessarily $(\ell-1)$-well-indumatched. For instance, a $P_5$ is $3K_2$-free, but it is not 2-well-indumatched. However, 
if we restrict to 1-well-indumatched graphs, then the converse becomes true as 
well. Indeed, if $G$ is not $1$-well-indumatched, then $G$ is either 
$k$-well-indumatched with $k\geq 2$, or it is not well-indumatched. 
In both cases, $G$ has an induced matching of size 2 which induces a $2K_2$. 
So, we have the following:

\begin{remark}\label{prop:1WIM}
	A graph $G$ is $1$-well-indumatched if and only if $G$ is a non-empty $2K_2$-free graph.
\end{remark}

Remark \ref{prop:1WIM} is a forbidden subgraph characterization for $1$-well-indumatched graphs. This implies directly that 1-well-indumatched graphs form a hereditary class of graphs, that is a class of graphs closed under taking induced subgraphs. Note that this is in contrast with the non-hereditary nature of $k$-well-indumatched graphs starting from already $k\geq 2$; recall the example of a $P_7$ which is 2-well-indumatched but contains a $P_5$ which is not well-indumatched. 

Remarks \ref{rem:K2} and \ref{prop:1WIM} have some consequences from the computational complexity point of view. Note that {\sc Independent Set} and {\sc Dominating Set} problems are shown to be NP-complete in the class of well-indumatched graphs \cite{Baptiste2017}. In contrast to this hardness result, Remark \ref{rem:K2} and the fact that {\sc Weighted Independent Set} is polynomial time solvable in $kK_2$-free graphs, for any fixed $k$ \cite{Balas1989}, implies the following:
\begin{cor}
{\sc Weighted Independent Set} is polynomial time solvable,
when restricted to $k$-well-indumatched graphs for any positive $k$. 
\end{cor}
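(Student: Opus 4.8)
The plan is to exploit the containment already recorded in Remark \ref{rem:K2}: every $k$-well-indumatched graph is $(k+1)K_2$-free. Since the input graph $G$ is promised to be $k$-well-indumatched for a fixed positive integer $k$, it therefore lies in the class of $\ell K_2$-free graphs with $\ell = k+1$, which is again a fixed constant. First I would invoke the cited polynomial-time algorithm \cite{Balas1989}, which solves \textsc{Weighted Independent Set} on $\ell K_2$-free graphs for every fixed $\ell$. Running this algorithm with $\ell = k+1$ on $G$ then returns a maximum-weight independent set in polynomial time, which is exactly the required conclusion.

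A point worth making explicit is that the procedure never needs to verify that $G$ is genuinely $k$-well-indumatched. The corollary only asserts tractability on inputs drawn from this class, so the algorithm may simply assume membership and apply the $(k+1)K_2$-free routine directly. This is important precisely because recognizing well-indumatched graphs is co-NP-complete in general, so the tractability established here cannot and does not rely on first testing the defining property; the only structural ingredient is the one-directional inclusion of Remark \ref{rem:K2}, and no converse is needed.

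There is essentially no obstacle here: the whole argument is the composition of the forbidden-subgraph bound with an off-the-shelf algorithmic result. The only subtlety to guard against is that $k$ must be treated as fixed rather than as part of the input, so that $k+1$ is a constant and the running time of the algorithm of \cite{Balas1989} remains polynomial. If $k$ were permitted to grow with the input size, the degree of the polynomial bounding the running time would grow in tandem, and the claim would no longer follow in this uniform form.
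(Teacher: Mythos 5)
Your proposal is correct and matches the paper's argument exactly: the paper likewise combines Remark \ref{rem:K2} (every $k$-well-indumatched graph is $(k+1)K_2$-free) with the polynomial-time algorithm of \cite{Balas1989} for \textsc{Weighted Independent Set} on $\ell K_2$-free graphs with $\ell$ fixed. Your additional observations about not needing to verify membership and about $k$ being a fixed constant are sound but do not change the substance.
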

On the other hand, the NP-completeness of the  {\sc Dominating Set} problem in well-indumatched graphs can be strengthened using Remark \ref{prop:1WIM} and the fact that  {\sc Dominating Set} is NP-complete in split graphs \cite{Bertossi1984, Corneil1984}. A graph is {\em split},
if its vertex set can be partitioned into a clique and an independent set. It is known that $G$ is a split graph, if and only if it contains no $2K_2,C_4$ or $C_5$ as induced subgraph \cite{golumbic}. Thus, split graphs are $2K_2$-free and therefore 1-well-indumatched graphs contain the class of split graphs. This implies that {\sc Dominating Set} is NP-complete in well-indumatched graphs, even if every maximal induced matching has size 1.
\begin{cor}\label{cor:1WIM}
	{\sc Dominating Set} is NP-complete in 1-well-indumatched graphs.
\end{cor}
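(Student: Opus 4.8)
The plan is to obtain NP-completeness by combining a cost-free NP-membership argument with a class-containment argument for the hardness direction, exactly along the lines sketched in the discussion preceding the statement; I would deliberately avoid constructing any new reduction gadget. Membership in NP is inherited for free, since {\sc Dominating Set} lies in NP on arbitrary graphs (a candidate set of size at most $k$ is verified in polynomial time) and restricting attention to the subclass of 1-well-indumatched graphs only shrinks the instance space. All the content of the corollary is therefore in the hardness direction, where my strategy is to show that the instances already known to be hard for split graphs in fact lie inside the class of 1-well-indumatched graphs.

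Concretely, I would argue that every split graph having at least one edge is 1-well-indumatched. By the forbidden-subgraph characterization of split graphs \cite{golumbic}, a split graph contains no induced $2K_2$ (nor $C_4$ nor $C_5$), so in particular it is $2K_2$-free; if it also has a nonempty edge set, then it is a non-empty $2K_2$-free graph, and Remark~\ref{prop:1WIM} certifies directly that it is 1-well-indumatched. This yields the containment of the class of split graphs with at least one edge into the class of 1-well-indumatched graphs, after which the known NP-completeness of {\sc Dominating Set} on split graphs \cite{Bertossi1984, Corneil1984} transfers immediately: any NP-hard family of split-graph instances possessing edges is simultaneously a family of 1-well-indumatched instances carrying the same {\sc Dominating Set} answers.

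The one point requiring care — and the only place the argument is more than a pure tautology — is the non-emptiness hypothesis in Remark~\ref{prop:1WIM}, which excludes edgeless graphs from being 1-well-indumatched. I would dispose of it by a trivial case split rather than by a gadget (and I would resist padding with a disjoint edge, since attaching a disjoint $K_2$ to a split graph already containing an edge creates an induced $2K_2$ and destroys splitness). An edgeless split graph is simply an independent set, on which {\sc Dominating Set} is solved in linear time, so all the difficulty of {\sc Dominating Set} on split graphs is concentrated on the split graphs that possess at least one edge, and these are 1-well-indumatched by the previous paragraph. Consequently a hypothetical polynomial-time algorithm for {\sc Dominating Set} on 1-well-indumatched graphs would, together with the trivial handling of the edgeless case, furnish a polynomial-time algorithm for {\sc Dominating Set} on \emph{all} split graphs, contradicting its NP-completeness \cite{Bertossi1984, Corneil1984} unless $\mathrm{P}=\mathrm{NP}$. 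I anticipate no genuine obstacle beyond this bookkeeping, the entire corollary being a short deduction from the forbidden-subgraph equivalence of Remark~\ref{prop:1WIM}.
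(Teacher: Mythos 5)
Your proposal is correct and follows essentially the same route as the paper: split graphs are $2K_2$-free, hence (when non-empty) 1-well-indumatched by Remark~\ref{prop:1WIM}, so the NP-completeness of {\sc Dominating Set} on split graphs transfers directly. Your explicit handling of the edgeless case is a minor tidying-up that the paper glosses over, not a different argument.
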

Note that there is no containment relationship between $k$-well-indumatched graphs and $k'$-well-indumatched graphs for $k'>k$ (they are actually disjoint sets partitioning the set of all well-indumatched graphs) and therefore, Corollary \ref{cor:1WIM} has no consequence on the NP-completeness of {\sc Dominating Set} in $k$-well-indumatched graphs for $k>1$.  

Remark \ref{prop:1WIM} is also an important intermediary result which makes the recognition of $k$-well-indumatched graphs polynomial time solvable, whenever $k$ is fixed. Note that this is in contrast with the co-NP-completeness of the recognition problem in general \cite{Baptiste2017}.

\begin{thm}\label{thm:recognition}
    Given a graph $G$, it can be decided in time $O(m^{k-1}n^4)$ whether $G$ is $k$-well-indumatched or not,
    where $n$ and $m$ are the order and the size of $G$,
    respectively.
\end{thm}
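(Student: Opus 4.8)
The plan is to reduce the problem to the base case $k=1$, for which Remark \ref{prop:1WIM} already provides a clean forbidden-subgraph test, by peeling off one edge of an induced matching at a time. The key structural fact I would establish first is a recursion: for $k\ge 1$, a graph $G$ is $k$-well-indumatched if and only if $G$ has at least one edge and, for \emph{every} edge $e=uv$, the co-indumatched subgraph $G_e$ obtained by deleting the closed neighborhood $N[u]\cup N[v]$ is $(k-1)$-well-indumatched; the base case being that $G$ is $0$-well-indumatched exactly when $G$ is edgeless.

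To prove this recursion I would use the covering characterization recalled in Section \ref{sec:prem} (an induced matching is maximal if and only if it covers $E(G)$). The main point is the correspondence $M\mapsto M\setminus\{e\}$ between the maximal induced matchings of $G$ that contain a fixed edge $e=uv$ and the maximal induced matchings of $G_e$: the edges of $G$ lying outside $G_e$ are precisely those having an endpoint in $N[u]\cup N[v]$, and these are exactly the edges already covered by $e$, whereas every edge of $E(G_e)$ has both endpoints outside $N[u]\cup N[v]$ and is therefore at distance at least $2$ from $e$. Hence $M$ covers $E(G)$ if and only if $M\setminus\{e\}$ covers $E(G_e)$, and since $G_e$ is an induced subgraph the adjacency and incidence relations needed for ``being an induced matching'' and ``covering'' are unchanged. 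Because $G$ has an edge exactly when the empty matching is non-maximal, this correspondence shows that every maximal induced matching of $G$ has size $k$ if and only if every maximal induced matching of each $G_e$ has size $k-1$.

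Unrolling the recursion $k-1$ times yields the algorithm: enumerate the induced matchings $\{e_1,\dots,e_{k-1}\}$ of size $k-1$ by repeatedly choosing an edge and passing to the successive co-indumatched subgraphs; at each intermediate step verify that the current residual still has an edge (a residual with no edge would exhibit a maximal induced matching of size strictly less than $k$, so $G$ is not $k$-well-indumatched), and at the bottom test whether the final residual is $1$-well-indumatched, i.e. non-empty and $2K_2$-free, using Remark \ref{prop:1WIM}. The $2K_2$-free test is carried out by brute force over pairs of edges of the residual, in time $O(n^4)$ since a residual has $O(n^2)$ edges. I would stress that these per-level emptiness checks are exactly what rules out the maximal induced matchings of size below $k-1$ that an enumeration of size-$(k-1)$ matchings alone would miss.

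For the running time, there are $O(m^{k-1})$ induced matchings of size $k-1$ (for fixed $k$), hence $O(m^{k-1})$ executions of the base-case test, each costing $O(n^4)$, for a total of $O(m^{k-1}n^4)$; the work spent building the residual graphs along the $O(m^{k-2})$ shorter chains is only $O(m^{k-2}(n+m))$ and is absorbed. The main obstacle I anticipate is the correctness of the recursion: verifying the maximal-matching correspondence cleanly (that deleting $N[u]\cup N[v]$ removes exactly the edges covered by $e$ while preserving the induced-matching and covering relations on the remaining edges) and confirming that the per-level emptiness checks account for all maximal induced matchings of size below $k$, so that the enumeration over size-$(k-1)$ matchings together with these checks is genuinely exhaustive.
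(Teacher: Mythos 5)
Your proof is correct and follows essentially the same route as the paper: the same recursion (peel off an edge $e$ together with everything it covers, reducing $k$-well-indumatchedness to $(k-1)$-well-indumatchedness of every residual), unrolled $k-1$ times down to the non-empty-and-$2K_2$-free test of Remark \ref{prop:1WIM}, with the same count of $O(m^{k-1})$ branches times $O(n^4)$ per base-case test. Your explicit per-level emptiness checks tighten a point the paper leaves implicit (the ``for every edge'' quantifier is vacuously true on an edgeless residual), but the argument is otherwise the same.
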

\begin{proof}
We note that a graph $G$ is $k$-well-indumatched if and only if for every 
edge $e\in E(G)$, the graph $G \setminus C(e)$, where $C(e)$ is the set of edges 
covered by $e$, is $(k-1)$-well-indumatched. This is indeed a necessary condition 
for $G$ being $k$-well-indumatched. Besides, if for all $e\in E(G)$, 
every maximal induced matching of $G\setminus C(e)$ has size $k-1$, 
then every maximal induced matching of $G$ has size $k$, 
thus $G$ is $k$-well-indumatched. Now, repeat recursively $k-1$  times the 
removal of an edge $e$ together with $C(e)$. The above equivalence implies 
that $G$ is well-indumatched if and only if this recursive  procedure yields a 
1-well-indumatched graph for any choice of $k-1$ edges throughout the recursive 
procedure. Since there are at most $O(m^{k-1})$ such choices and whether the 
remaining graph is 1-well-indumatched or not can be checked in time $O(n^4)$ 
(by Remark \ref{prop:1WIM}, simply by checking if any possible subset of four 
vertices induces a $2K_2$), the overall procedure takes time $O(m^{k-1}n^4)$.
\end{proof}

\section{Conclusion}
Well-indumatched graphs were introduced to the literature very recently. Consequently, the structure of well-indumatched graphs is not yet well understood and seems to be a very promising research area. In this work, we characterized well-indumatched trees and studied well-indumatched graphs of bounded girth. We established several structural results on well-indumatched graphs of bounded girth and conjectured that there is no connected well-indumatched graph of girth 11 other than $C_{11}$. 

As a future research, it would be interesting to characterize those well-indumatched graphs in special graph classes and to derive polynomial time recognition algorithms. Our characterization of minimal well-indumatched graphs of girth at least 9 (in Theorem \ref{thm:minimalWIM}) do not seem to imply directly a polynomial time recognition algorithm for well-indumatched graphs of girth at least 9. It would be interesting to exploit this characterization in order to develop such a recognition algorithm, or to investigate the recognition of well-indumatched graphs of bounded girth more broadly. Some other graph classes that could be investigated in this direction are interval graphs, claw-free graphs or equimatchable graphs. 

Another research direction would be the study of graphs having a bounded gap (1 or some fixed $k$) between the size of a maximum induced matching and minimum maximal induced matching.  This approach has been applied to well-covered graphs and yielded several significant results (see e.g. \cite{Barbosa2013, Ekim2018}), and more recently to equimatchable graphs \cite{Deniz2016}.

\section{Acknowledgement}
The work of the second author is supported by the Turkish Academy of Science GEBIP award. Part of her research was carried out during her stay at the University of Oregon Computer and Information Science Department under Fulbright Association Visiting Scholar Grant and TUBITAK 2219 Programme, all of whose support is greatly appreciated.
Also, the research of the first, third and fourth authors was partly funded by Iran National
Science Foundation (INSF) under the contract No. 96004167.
Also, the last author was partly funded by Iran National
Science Foundation (INSF) under the contract No. 93030963.

\end{document}